\def\UseBibLatex{1}
\def\input@path{{styles/}}
\providecommand{\BibLatexMode}[1]{}
\providecommand{\BibTexMode}[1]{}
\renewcommand{\BibLatexMode}[1]{#1}
\renewcommand{\BibTexMode}[1]{}
  \renewcommand{\BibLatexMode}[1]{}
  \renewcommand{\BibTexMode}[1]{#1}
\theoremstyle{plain}%
\newtheorem{theorem}{Theorem}[section]
\newtheorem{lemma}[theorem]{Lemma}
\newtheorem{corollary}[theorem]{Corollary}
\newtheorem{observation}[theorem]{Observation}
\theoremstyle{plain}%
\newtheorem*{remark:unnumbered}[theorem]{Remark}%
\theoremstyle{nonumberplain}%
\newtheorem{proof}{Proof:}%
\providecommand{\emphind}[1]{}%
\renewcommand{\emphind}[1]{\emph{#1}\index{#1}}
\definecolor{blue25emph}{rgb}{0, 0, 11}
\providecommand{\emphic}[2]{}
\renewcommand{\emphic}[2]{\textcolor{blue25emph}{%
      \textbf{\emph{#1}}}\index{#2}}
\providecommand{\emphi}[1]{}%
\renewcommand{\emphi}[1]{\emphic{#1}{#1}}
\definecolor{almostblack}{rgb}{0, 0, 0.3}
\providecommand{\emphw}[1]{}%
\renewcommand{\emphw}[1]{{\textcolor{almostblack}{\emph{#1}}}}%
\providecommand{\emphOnly}[1]{}%
\renewcommand{\emphOnly}[1]{\emph{\textcolor{blue25emph}{\textbf{#1}}}}
\newcommand{\myqedsymbol}{\rule{2mm}{2mm}}
\newcommand{\SarielThanks}[1]{%
   \thanks{%
      Department of Computer Science; %
      University of Illinois; %
      201 N. Goodwin Avenue; %
      Urbana, IL, 61801, USA; %
      \href{mailto:spam@illinois.edu}{sariel@illinois.edu}; %
      \url{http://sarielhp.org/}.%
   #1%
   }%
}
\newcommand{\HLink}[2]{\hyperref[#2]{#1~\ref*{#2}}}
\newcommand{\HLinkSuffix}[3]{\hyperref[#2]{#1\ref*{#2}{#3}}}
\newcommand{\figlab}[1]{\label{fig:#1}}
\newcommand{\figref}[1]{\HLink{Figure}{fig:#1}}
\newcommand{\thmlab}[1]{{\label{theo:#1}}}
\newcommand{\thmref}[1]{\HLink{Theorem}{theo:#1}}
\newcommand{\corlab}[1]{\label{cor:#1}}
\newcommand{\corref}[1]{\HLink{Corollary}{cor:#1}}%
\newcommand{\obslab}[1]{\label{observation:#1}}
\newcommand{\obsref}[1]{\HLink{Observation}{observation:#1}}
\newcommand{\seclab}[1]{\label{sec:#1}}
\newcommand{\secref}[1]{\HLink{Section}{sec:#1}}
\newcommand{\lemlab}[1]{\label{lemma:#1}}
\newcommand{\lemref}[1]{\HLink{Lemma}{lemma:#1}}%
\providecommand{\eqlab}[1]{}%
\renewcommand{\eqlab}[1]{\label{equation:#1}}
\newcommand{\Eqref}[1]{\HLinkSuffix{Eq.~(}{equation:#1}{)}}
\providecommand{\remove}[1]{}%
\newcommand{\Set}[2]{\left\{ #1 \;\middle\vert\; #2 \right\}}
\newcommand{\pth}[1]{\mleft(#1\mright)}%
\newcommand{\ceil}[1]{\mleft\lceil {#1} \mright\rceil}
\newcommand{\brc}[1]{\left\{ {#1} \right\}}
\renewcommand{\Re}{\mathbb{R}}%
\newlist{compactenumA}{enumerate}{5}%
\setlist[compactenumA]{topsep=0pt,itemsep=-1ex,partopsep=1ex,parsep=1ex,%
   label=(\Alph*)}%
\newlist{compactenuma}{enumerate}{5}%
\setlist[compactenuma]{topsep=0pt,itemsep=-1ex,partopsep=1ex,parsep=1ex,%
   label=(\alph*)}%
\newlist{compactenumI}{enumerate}{5}%
\setlist[compactenumI]{topsep=0pt,itemsep=-1ex,partopsep=1ex,parsep=1ex,%
   label=(\Roman*)}%
\newlist{compactenumi}{enumerate}{5}%
\setlist[compactenumi]{topsep=0pt,itemsep=-1ex,partopsep=1ex,parsep=1ex,%
   label=(\roman*)}%
\newlist{compactitem}{itemize}{5}%
\setlist[compactitem]{topsep=0pt,itemsep=-1ex,partopsep=1ex,parsep=1ex,%
   label=\ensuremath{\bullet}}%
\numberwithin{figure}{section}%
\numberwithin{table}{section}%
\numberwithin{equation}{section}%
\newcommand{\eps}{\varepsilon}
\newcommand{\lmax}{l_{max}}
\newcommand{\lcurr}{\Delta_{curr}}
\newcommand{\lfinal}{\Delta_{final}}
\newcommand{\R}{{\cal R}}
\newcommand{\AprxDiameter}{{\tt AprxDiameter}}
\newcommand{\AprxDiamWSPD}{{\tt AprxDiamWSPD}}
\newcommand{\PSET}{P}
\newcommand{\LEFT}{\mathop{\mathrm{left}}}
\newcommand{\RIGHT}{\mathop{\mathrm{right}}}
\newcommand{\parent}{\mathop{\mathrm{parent}}}
\providecommand{\diam}{\Delta}
\renewcommand{\diam}{\Delta}
\newcommand{\Pcurr}{{\cal P}_{curr}}
\newcommand{\U}{\mathcal{U}}
\newcommand{\bd}{{\partial}}%
\newcommand{\CH}{{\mathcal{CH}}}
\newcommand{\etal}{\textit{et~al.}\xspace}
\providecommand{\TPDF}[2]{\texorpdfstring{#1}{#2}}
\begin{document}
\title{A Practical Approach for Computing the Diameter of a
   Point Set}

\author{%
   Sariel Har-Peled%
   \SarielThanks{}%
}

\date{March 26, 2001\footnote{Re\LaTeX{}ed on \today.}}
\maketitle

\begin{abstract}
    We present an approximation algorithm for computing the diameter
    of a point-set in $\Re^d$.  The new algorithm is sensitive to the
    ``hardness'' of computing the diameter of the given input, and for
    most inputs it is able to compute the {\em exact} diameter
    extremely fast. The new algorithm is simple, robust, has good
    empirical performance, and can be implemented quickly.  As such,
    it seems to be the algorithm of choice in practice for
    computing/approximating the diameter.
\end{abstract}

\section{Introduction}

Given a set of $n$ points $S \subseteq \Re^d$ computing the
diameter of $S$ --- the pair of points of $S$ that are
furthest away from each other --- had attracted a lot of
attention in recent years. While the problem is quite easy
in $2$-dimensions \cite{t-sgprc-83}, even in three
dimensions it becomes non-trivial. Clarkson and Shor
\cite{cs-arscg-89} were the first to give an $O(n \log{n})$
expected time algorithm for computing the diameter in
$3$-dimensions.  However, coming up with a similar
deterministic algorithm proved to be surprisingly hard (see
\cite{r-dadsl-00} for the long and somewhat painful history
of this problem), concluding with the recent result of Ramos
\cite{r-dadsl-00}.

In practice, the diameter is quite useful as it provides a
reliable estimate of the point-set extent, it can be used in
computing a tight fitting bounding box for the point-set
\cite{bh-eamvb-01}, and it can be also used in visualization
and data-mining \cite{fl-fmfai-95}. In particular, in
the graphics community PCA (principal component analysis)
was used to find such tight fitting bounding boxes.
Unfortunately, PCA work reasonably well for the majority of
the inputs but it can be easily fooled.

While theoretically very satisfying, in practice, those
algorithms are quite hard to implement requiring weeks (or
even months) to implement. Furthermore, such implementations
would probably suffer from serious numerical problems. At
the other end of the scale, the trivial algorithm, of trying
all possible pairs and picking the longest pair, is trivial
and can be implemented in a few minutes. However, even for
small inputs it is too slow to be used (for example, for
$10,000$ points in 3d the naive algorithm computes the
diameter in about ten seconds on a Sun Ultra 5).

Motivated by this discrepancy between theory and practice, we present,
in this paper, a simple algorithm for approximating the diameter of a
point-set. For $\eps > 0$, and a set of points $P$ in $\Re^3$, it
computes a pair of points $p,q \in S$, so that
$|p q| \geq (1-\eps)\diam(P)$ in $O((n + 1/\eps^{3})\log{1/\eps})$
time (the algorithm works also in higher dimensions, see
\thmref{bound} for details), where $\diam(P)$ is the length of the
diameter of $P$. This contrast with the faster
$O(n + 1/\eps^{3/2} \log{1/\eps})$ time algorithm due to Barequet and
Har-Peled \cite{bh-eamvb-01} (this algorithm uses the exact algorithm
as a subroutine and is thus impractical), and $O(n + 1/\eps^{5/2})$
algorithm due to Chan \cite{c-adwse-02}\footnote{In fact, Chan's
   algorithm can be speeded up to $O(n + 1/\eps^{2})$ using ideas from
   of \cite{bh-eamvb-01}}.

The new algorithm can also be used to compute the exact
diameter. Although, in the worst case, the algorithm running
time is still quadratic, this running time is sensitive to
the ''hardness'' of the input. Namely, an instance of the
diameter problem is hard, if almost all the pairs of points
in the input have length close to the diameter of the
point-set. However, in most inputs the diameter is quite
distinctive and only a small fraction of all possible pairs
are serious contenders to be the diametrical pair. The
running time of the new algorithm is related to this
quantity and for most inputs it computes the exact diameter
extremely fast.

The idea underlining our algorithm is the following: We compute a
hierarchical decomposition of the given point-set maintaining
implicitly all the pairs of points that might serve as the
diameter. We throw away batches of pairs that are guaranteed not to
contain a long pair.  We repeatedly refine our hierarchical
decomposition (and the associated pairs decomposition) till we reach
the required approximation. In particular, the algorithm works by
computing the well-separated pairs decomposition of the given
point-set \cite{ck-dmpsa-95}, throwing away all the pairs that are
guaranteed to be too short to contain the diameter.  The
implementation of the new algorithm is available on the web
\cite{h-scpca-00}. An implementation in \texttt{Julia} is available
here: \url{https://github.com/sarielhp/WPSD.jl}.

In \secref{prelim}, we define notation and review previous
known approximation algorithms for the diameter.  In
\secref{algorithm}, we present our new algorithm.  In
\secref{analysis}, we provide a theoretical analysis of the
new algorithm.  In \secref{results} we summarize our
experimental results comparing our algorithm with several
other algorithms.  Conclusions are presented in
\secref{conclusions}.

\section{Preliminaries}
\seclab{prelim}

In the following, $P$ is a set of $n$ points in $\Re^d$.
The {\em diameter} of $P$, denoted by $\diam$, is the
maximum distance between a pair of points of $P$.  The two
points $p_\diam, q_\diam \in P$ realizing the diameter of
$P$ are the {\em diametrical pair} of $P$.

\subsection{Fair Split Tree}

In our algorithm, we use the following hierarchical
data-structure known as Fair Split Tree \cite{ck-dmpsa-95}
(a similar notion was previously suggested by Vaidya
\cite{v-oaann-89}).  Generally speaking, Fair-split tree is
an adaptive variant of quadtrees. (Our algorithm can be
described using quadtrees. However, we chose to describe it
using the fair-split tree as in practice it seems to perform
better, resulting in a somewhat more involved analysis.)

Given a point-set $P$ of $n$ points in $\Re^d$, a fair-split
tree is a tree having the points of $P$ stored in its leaves
(a leaf might contain several points of $P$), where each
point of $P$ is stored only once in the tree. For a node
$v$, let $\PSET(v)$ denote the points stored in the subtree of
$v$, and $\R(v)$ denote the minimum axis-parallel bounding
box of $\PSET(v)$.  Let $c(v)$ denote the center of $\R(v)$, and
let $r(v)$ denotes the radius of the minimum ball centered
at $c(v)$ that contains $\R(v)$.

Let $T_0$ be the tree formed by a single node that
corresponds to the whole set $P$.  The tree is constructed
iteratively as follows: In the $i$-th stage one picks a leaf
$u$ of $T_{i-1}$ which contains more than one point of $P$.
Compute the minimum axis parallel bounding box $\R(u)$,
split it in the middle of its longest edge, create two
children $u_l,u_r$ of $u$ that correspond to the two new
boxes, and compute $\PSET(u_l),\PSET(u_r)$ from $\PSET(u)$ and
store them in $u_l,u_r$, respectively.  We will use the
naive implementation of this operation that takes
$O(|\PSET(u)|)$ time (it can be done faster with appropriate
preprocessing; see \cite{ck-dmpsa-95}).

For a node $v \in T$, $\parent(v)$ denotes the parent of $v$
in $T$, and $\lmax(v)$ denotes the length of the longest
edge of $\R(v)$.  A point $p \in P$ {\em belongs} to $v$ if
$p \in \R(v)$.  We associate with a pair of nodes $(u,v)$ of
$T$, the set of pairs of points $\PSET(u,v) = (\PSET(u)
\times \PSET(v)) \cup (\PSET(v) \times \PSET(u))$.  A pair
of points $p,q \in P$ belongs to the pair $(u,v)$ if $(p,q)
\in \PSET(u,v)$.  For a pair $\mu =(u,v)$ of nodes of a
fair-split tree $T$ let
\[
M(\mu) = M(u,v) = |c(u)c(v)| + r_u + r_v
\]
denote the naive upper bound on the maximum distance of any
pair of points of $\PSET(u,v)$.

\remove{ The crucial property of the fair-split tree that we
   need is the following packing property:

   \begin{lemma}[\cite{ck-dmpsa-95}, Lemma 4.1]
       Let $T$ be a fair split tree associated with $P$, $C$
       a cube, $S = \brc{v_1, \ldots, v_l}$ be a set of
       nodes in $T$ such that $\PSET(v_i)\cap \PSET(v_j) =
       \emptyset$ for all $i \ne j$, $\lmax(\parent(v_i)) \geq
       \lmax(C) /c$, $\R(v_i) \cap C \ne \emptyset$. Let
       $K(c,d)$ be the maximum number of elements of $S$. We
       have, $K(c,d) \leq (3 c + 2)^d$.
   \end{lemma}

   Intuitively, \lemref{packing} states that the number of
   nodes of a fair-split tree that are roughly of the same
   size which are close to each other is limited (i.e., it
   is impossible to have a huge number of very thin and long
   cells close to each other).
   }

\subsection{Review of Known Approximation Algorithms for
   computing the Diameter}

In this section, we review the known approximation
algorithms to the diameter. We implemented most of them and
compared them to our new algorithm.  Approximation
algorithms for the diameter were previously suggested by
Agarwal \etal  \cite{ams-fnmst-92}, Barequet and Har-Peled
\cite{bh-eamvb-01} and Chan \cite{c-adwse-02}.  Barequet and
Har-Peled algorithm is a folklore result, although we are
not aware of earlier reference.

\paragraph{Axis parallel bounding box}
Let $\R = \R(P)$ denote the axis parallel bounding box of
$P$. Let $p,q \in P$ be the two points that lie on the two
sides of $\R$ furthest away from each other.  Clearly, $|p q|
\leq \diam \leq \sqrt{d}|p q|$. Since $B$ can be computed in
linear time, it follows that we can compute a constant
approximation to the diameter of $P$ (i.e., $p,q$) in linear
time.

\paragraph{Snapping $P$ to a Grid}
We decompose $\R = \R(P)$ into a grid $G$ of $k \times k
\times \ldots \times k$ cells, where $k = \ceil{1/\eps}$.
Each cell is a shrunk copy of $\R$ by a factor of $k$. Let
$P_G'$ be the set of the centers of the cells of $G$ that
contains points of $P$ in their interior.  Note, that $P_G'$
might contain more than two points on an axis-parallel line
$\ell$. Since the pair of points that realize the diameter
are pair of vertices of $\CH(P_G)$, it is clear that we can
remove all the points of $P_G'$ on such a line $\ell$ which
are not extreme. Repeating this process for all the
axis-parallel lines that pass through points of $P_G'$,
reduces the number of points from $O(1/\eps^d)$ to
$O(1/\eps^{d-1})$, and let $P_G$ denote the resulting set of
points. We call this process {\em grid cleaning}. It takes
$O(n + 1/\eps^{d-1})$ time. Being slightly more careful
about the implementation, the running time can be reduced to
$O(n)$.  Clearly,
\[
(1-\eps)\diam(P) \leq \diam(P_G) \leq (1+\eps)\diam(P).
\]
Computing the diameter of $P_G$ in a brute force manner takes
$O(1/\eps^{2(d-1)})$ time. As observed by Barequet and Har-Peled
\cite{bh-eamvb-01}, by using known bounds on the number of vertices of
convex hulls of grid points, together with output-sensitive
convex-hull algorithms, this result can be slightly improved. See
\cite{bh-eamvb-01} for the details. For $d=3$ case, \cite{bh-eamvb-01}
present an algorithm that runs in $O(n + 1/\eps^{3/2})$, which is only
of theoretical intersect as it uses the algorithm of Clarkson and Shor
\cite{cs-arscg-89}.

\paragraph{Searching for the Diameter Direction}
Let $\ell$ be a line, and let $P_\ell$ be the projection of
$P$ into $\ell$. Clearly, the distances between a pair of
projected points of $P_\ell$ is smaller than the original
distance of the pair. In particular, $\diam(P_\ell) \leq
\diam(P)$. However, if the angle $\alpha$ between $\ell$ and
the diametrical pair $p_\diam q_\diam$ is smaller than
$\sqrt{2\eps}$ then $\diam(P_\ell) \geq |p_{\diam}
q_{\diam}|\cos(\alpha) \geq (1 -\alpha^2/2) \diam \geq
(1-\eps)\diam$. Thus, by covering the sphere of directions
in $\Re^d$ by $O(1/\eps^{(d-1)/2})$ caps of angular radius
$\leq \sqrt{2\eps}$, and picking a vector inside each cap,
we have a set $V$ of $O(1/\eps^{(d-1)/2})$ directions, so
that $p_\diam q_\diam$ is an angular distance $\leq
\sqrt{2\eps}$ from one of those directions. In particular,
projecting the points of $P$ into each of the directions of
$V$, finding the extreme pair in each direction, and
taking the pair of maximum distance, results in an
$\eps$-approximation to $\diam$. The running time is
$O(n/\eps^{(d-1)/2})$.

By first doing grid cleaning (with $\eps/2$) and then using
the projection algorithm, we have an $\eps$-approximation to
the diameter that works in $O(n + 1/\eps^{3(d-1)/2})$ time.

\paragraph{Reducing the number of projections}
Chan \cite{c-adwse-02} observed that the bottleneck in the
above algorithm is the large number of projection it
performs. Instead of projecting the points to lines,
project the points into hyperplanes, and compute the
approximate diameter of the projected point-set. One can
easily find a family of $O(1/\sqrt{\eps})$ hyperplanes, so
that the angular distance between the diameter and one of
the hyperplanes is smaller than $\sqrt{\eps}$. Now compute
recursively, for each projected point-set its $\eps/2$
approximate diameter. The maximum diametrical pair computed
in the recursive calls is an $\eps$-approximation to the
diametrical pair.

Observing that one can perform grid-cleaning before
projecting the points (and thus also in the recursive calls
on the $(d-1)$-dimensional projected points) results in an
algorithm with $O(n +1/\eps^{d-1/2})$ running time.  By
doing more aggressive grid-cleaning (using convex-hull
algorithms in two and three dimensions) one can reduce the
running time to $O(n + 1/\eps^{d-1})$.  See
\cite{c-adwse-02}.

\section{The Algorithm}
\seclab{algorithm}

Let $P$ be a point-set in $\Re^d$, $\eps>0$ a prescribed
approximation parameter, and $T$ be a fair-split tree of
$P$. We will compute parts of $T$ as the algorithm
progresses in an on-demand fashion, starting from a tree
having a single node that contains all the points of $P$.

The algorithm works by maintaining a current approximation
to the diameter, denoted by $\lcurr$, and a set $\Pcurr$ of
pairs of nodes of $T$, so that all the pairs of points that
(substantially) might improve our approximation must belong
to one of those nodes pairs. Such a node pair might couple a
node with itself (i.e., in the beginning we couple the root
of $T$ with itself as all the $\Theta(n^2)$ pairs of points
are candidates to be the diametrical pair).

The algorithm initializes $\lcurr$ to be the side length of
$\R({root(T)})$ (which is a $\sqrt{d}$ approximation to the
diameter). And $\Pcurr = \brc{ (root(T), root(T)) }$.  The
algorithm associates the value $M(u,v)$ with a pair $(u,v)$,
and it maintains a heap of the pairs of $\Pcurr$, sorted by
a decreasing order by this value. In each iteration, the
algorithm picks the maximum element in the heap (i.e., the
element $(u,v)$ in $\Pcurr$ with maximum $M(u,v)$, as
intuitively, this is the pair with the highest potential to
improve our current estimate of the diameter), inspect it,
and either throws it away, or generates $O(1)$ pairs from
it, and stores them in $\Pcurr$ (updating $\lcurr$ in the
process).

In particular, a pair $(u,v)$ can be thrown away if $\PSET(v)$ or
$\PSET(u)$ are empty, or if
\begin{equation}
    M(u,v) \leq (1+\eps)\lcurr, \eqlab{cond}
\end{equation}
as $M(u,v)$ is an upper bound on the distance of any pair of points in
$\PSET(v) \times \PSET(u)$, and the above condition states that all
pairs is $\PSET(u,v)$ are within $1+\eps$ of our current estimate to
the diameter.

\newcommand{\XIpe}[2]{\includegraphics{#1}}%
\begin{figure}
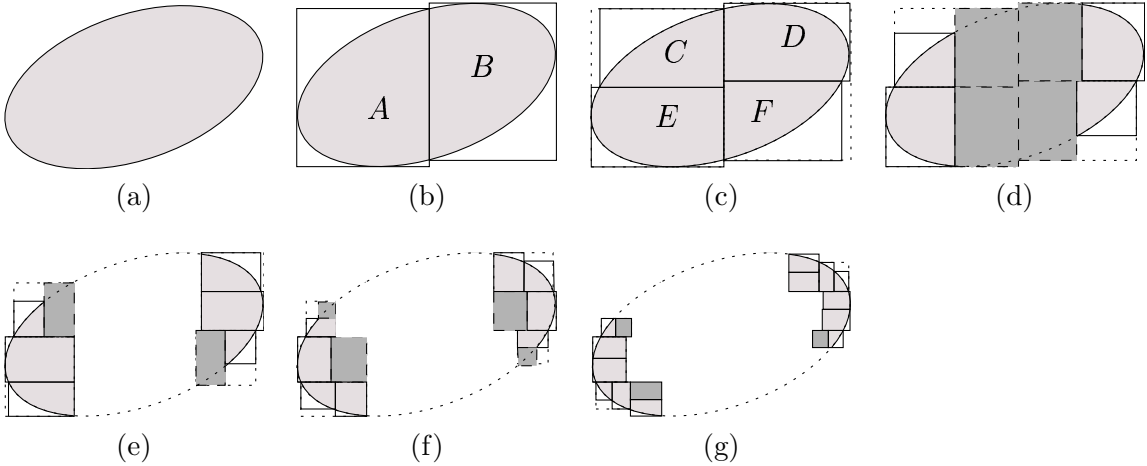

    \begin{tabular}{cccc}
        \XIpe{figs/diam01}{figs/diam01.ipe}
        &
        \XIpe{figs/diam02}{figs/diam02.ipe}
        &
        \XIpe{figs/diam03}{figs/diam03.ipe}
        &
        \XIpe{figs/diam04}{figs/diam04.ipe}
        \\
        (a) & (b) & (c) & (d)\\
        \\
        \XIpe{figs/diam05}{figs/diam05.ipe}
        &
        \XIpe{figs/diam06}{figs/diam06.ipe}
        &
        \XIpe{figs/diam07}{figs/diam07.ipe}
        \\
        (e) & (f) & (g)\\
    \end{tabular}
    \caption{Illustration of how \AprxDiameter{} works on a
       densely sampled ellipse. As the algorithm progresses
       the pairs-decomposition become finer, and pairs that
       are too short are being thrown away. The dark regions
       are nodes that all the pairs associated with them
       were thrown away. As time progresses, the pairs
       maintained by the algorithm converge to the real
       diameter. It seems that for all ``real'' inputs,
       after several iterations, only a small fraction of
       the input points would be contained in the active nodes
       maintained by the algorithm.}

    \figlab{example}
\end{figure}

If the algorithm decides not to throw away a pair $(u,v)$,
it expands it by all the pairs formed by the children of $u$
and $v$.  Namely, $(u,v)$ generates the four pairs
$(\LEFT(u), \LEFT(v))$, $(\LEFT(u), \RIGHT(v))$,
$(\RIGHT(u), \LEFT(v))$, $(\RIGHT(u),\RIGHT(v))$ (however,
if $u=v$ the pairs
\begin{equation*}
    (\LEFT(u),\RIGHT(v)), (\RIGHT(u),
    \LEFT(v))
\end{equation*}
are identical, and only three pairs should be generated in such a
case), where $\LEFT(u),\RIGHT(u)$ denotes the left and right child of
$u$ respectively. Note, that if $\PSET(u)$ (or $\PSET(v)$) is a
singleton we do not split it, splitting the pair only on one side of
it.  We insert the new pairs into $\Pcurr$.

Whenever the algorithm creates a pair $(u,v)$, it also
picks two points $p \in \PSET(u), q \in \PSET(v)$, computes $|p q|$,
and update $\lcurr$ if $|p q| > \lcurr$. Furthermore, a new
pair is being added to $\Pcurr$ only if it passes the
condition of \Eqref{cond}.  This guarantees that as the pair
decomposition becomes finer also the current approximation
improves.

As we descend in $T$ from $u$ to $\LEFT(u)$, $\RIGHT(u)$ in
the expansion process, it might be that $\LEFT(u)$ or
$\RIGHT(u)$ do not exist yet in $T$, as the node $u$ was not
split yet.  In this point in time, we split $u$ creating
$\LEFT(u)$, $\RIGHT(v)$ (splitting the set $\PSET(u)$ into
$\PSET(\LEFT(u)), \PSET(\RIGHT(u))$). We argue below that
our algorithm constructs only a small fraction of the whole
fair-split tree of $P$.

The algorithm stop when $\Pcurr$ becomes empty and returns
the value of $\lcurr$.  We call this algorithm
$\AprxDiameter{}$.  We modify slightly this algorithm, as
follows: When splitting a pair $(u,v)$ instead of generating
four pairs, we split only the node in the pair having the
larger bounding box (i.e., for a pair $(u,v)$ we split $u$
if $\lmax(u) > \lmax(v)$), generating only two new pairs.
The advantage of the new approach is that, although pairs
might still be unbalanced (i.e., the sizes of their
corresponding bounding boxes are of different orders of
magnitude), the pair is balanced if we refer to the node's
parent instead of the node itself. Let $\AprxDiamWSPD$
denote the new algorithm. This is how the well-separated
pairs decomposition is constructed by the algorithm of
\cite{ck-dmpsa-95}.

\figref{example} illustrates how the algorithm works, and
why in practice the algorithm works well: For most inputs
only a small fraction of the nodes of the fair-split tree
are active (i.e., participate in a pair which is stored in
$\Pcurr$) after several iterations. In particular, only
small fraction of the tree is being constructed by the
algorithm, and similarly, only a small fraction of all
possible pairs are considered by the algorithm.

\begin{lemma}
    The algorithms $\AprxDiameter$, $\AprxDiamWSPD$ always
    stop, and returns a number $D$, such that
    $(1-\eps)\diam \leq D \leq \diam$. Moreover, the running
    time of the algorithms is $O(n^2 \log{n})$ for any value
    of $\eps$.

    \lemlab{diam:alg}
\end{lemma}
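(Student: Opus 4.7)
The plan is to prove termination, the approximation bounds, and the running time by a single invariant-based argument that tracks the diametrical pair through the algorithm.

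\textbf{Termination.} First I would observe that a fair-split tree over $n$ points has at most $2n-1$ nodes, since each internal node has two children and each leaf contains at least one point of $P$. Hence there are at most $O(n^2)$ ordered pairs of nodes, and each pair is generated/processed at most once (the expansion in both versions strictly descends in $T \times T$). Pairs with a singleton node on each side are always discarded: such a pair has $M(u,v) = |pq|$ for the two stored points, and since the algorithm already called $|pq|$ into $\lcurr$ when the pair was created, $M(u,v) \leq \lcurr \leq (1+\eps)\lcurr$, so \Eqref{cond} fires. Thus the heap must eventually drain and the algorithm stops.

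\textbf{Approximation.} The bound $D \leq \diam$ is immediate: $\lcurr$ is only ever raised to the value $|pq|$ for an actual pair $p,q \in P$, so $\lcurr \leq \diam$ throughout. For the lower bound, let $(p_\diam, q_\diam)$ be a diametrical pair, and I would maintain the invariant
\[
\text{either some } (u,v) \in \Pcurr \text{ satisfies } (p_\diam, q_\diam) \in \PSET(u,v), \text{ or } (1+\eps)\lcurr \geq \diam.
\]
Initially the first clause holds, since $\Pcurr = \{(\text{root}, \text{root})\}$. The invariant is preserved in every iteration: when the algorithm processes a pair $(u,v)$ containing $(p_\diam, q_\diam)$, either it is discarded (in which case $(1+\eps)\lcurr \geq M(u,v) \geq |p_\diam q_\diam| = \diam$), or it is expanded into children pairs, at least one of which still contains $(p_\diam, q_\diam)$; that child is inserted into $\Pcurr$ unless it already satisfies \Eqref{cond}, in which case again $(1+\eps)\lcurr \geq \diam$. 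Updates to $\lcurr$ only increase it, preserving the invariant. Since the algorithm halts only when $\Pcurr = \emptyset$, the second clause must hold at termination, yielding $\lcurr \geq \diam/(1+\eps) \geq (1-\eps)\diam$.

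\textbf{Running time.} There are at most $O(n^2)$ node-pairs processed, each triggering $O(1)$ heap insertions/deletions on a heap of size $O(n^2)$, costing $O(\log n)$ per operation. The on-demand fair-split construction cost is dominated: each point lies in at most $O(n)$ ancestor boxes (since $T$ has $O(n)$ nodes), so the naive split operations sum to $O(n^2)$. The total is $O(n^2 \log n)$, and both \AprxDiameter{} and \AprxDiamWSPD{} obey the same bound because the one-sided splitting only changes the constant in the number of children generated.

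\textbf{Expected obstacle.} The substantive step is the invariant bookkeeping: one must verify that whenever the pair carrying $(p_\diam,q_\diam)$ is expanded, the child pair that still contains it is correctly either inserted into $\Pcurr$ or discarded via \Eqref{cond}, and in both versions of the algorithm (in $\AprxDiamWSPD$ only one side is split, so the child pair's $\PSET$ is easy to identify; in $\AprxDiameter$ the case $u=v$ requires checking that the three generated pairs still cover $\PSET(u)\times\PSET(u)$). Everything else is a routine counting argument.
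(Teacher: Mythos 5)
Your proposal is correct and takes essentially the same route as the paper: termination from the finiteness of the pair space, the approximation bound by following the diametrical pair down until the pair containing it is discarded (your explicit invariant is just a cleaner phrasing of what the paper asserts implicitly), and the $O(n^2\log n)$ bound by the same heap-operation count. The only cosmetic difference is that you make the bookkeeping invariant explicit, which is a slight improvement in rigor over the paper's one-line assertion that ``let $(u,v)$ be the pair $\ldots$ that fails the condition''; the substance is identical.
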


\begin{proof}
    Note, that when a node is split in the fair-split tree, at least
    one point of $P$ goes to each child.  Thus, there are $O(n)$ total
    nodes in $T$. It is easy to verify by induction that the algorithm
    never generates the same pair twice. Finally, if the algorithm
    encounters a pair $(u,v)$ such that $\PSET(u),\PSET(v)$ are a
    singletons, then the algorithm computes the length of the
    corresponding pair of points, updates $\lcurr$, and then the pair
    fails the condition of \Eqref{cond} (indeed, $M(u,v)$ is no more
    than the length of the corresponding pair, as $r_u = r_v = 0$),
    and the algorithm throws this pair away.

    As for the running time, we note that the time spent on
    constructing the whole fair-split tree is $O(n^2)$ in
    the worst case. Since splitting and handling each pair
    takes $O(\log{n})$ time (i.e., finding the maximal
    element in the heap of pairs), and there are $O(n^2)$
    pairs, it follows that the running time of the algorithm
    is $O(n^2 \log{n})$, as $O(n^2)$ insertions/deletions
    are being performed on the heap.

    As for the quality of approximation: Let $p_\diam, q_\diam$ be the
    diametrical pair of $P$, and let $(u,v)$ be the pair constructed
    by the algorithm such that $(p_\diam, q_\diam) \in \PSET(u,v)$ and
    $(u,v)$ fails the condition of \Eqref{cond}. Furthermore, let
    $\lcurr'$ denote the value of $\lcurr$ when the condition failed,
    and let $\lfinal$ denote the final value of $\lcurr$ returned by
    the algorithm. Clearly,
    \[
    (1+\eps)\lfinal \geq (1+\eps) \lcurr' \geq M(u,v) \geq
    |p_\diam q_\diam| = \diam.
    \]
    Thus,
    \[
    \lfinal \geq \frac{\diam}{1 + \eps} \geq
    \frac{\diam}{1 + \eps + \eps^2 + \cdots} = (1-\eps)\diam.
    \]
\end{proof}

\section{Theoretical Analysis of \AprxDiameter{} and
   \AprxDiamWSPD{}}
\seclab{analysis}

In this section, we prove that the running times of
$\AprxDiameter$ and $\AprxDiamWSPD$ are
$O((n+1/\eps^{2d})\log{1/\eps})$ and
$O((n+1/\eps^{3(d-1)/2})\log{1/\eps})$, respectively. The
less theoretically inclined reader is encouraged to skip
this section.

\begin{corollary}
    Let $(u_1, v_1), (u_2, v_2), \ldots, (u_m,v_m)$ the
    sequence of pairs handled by  \AprxDiameter{}. Then
    $M(u_1, v_1) \geq M(u_2, v_2) \geq \ldots \geq M(u_m,
    v_m)$.  \corlab{sorted}
\end{corollary}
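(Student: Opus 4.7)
The plan is to proceed by induction on $i$, with the hypothesis that after the $i$-th pair is handled, every pair remaining in $\Pcurr$ has $M$-value at most $M(u_i,v_i)$. The base case is immediate since $\Pcurr$ is initialized with only $(root(T), root(T))$. For the inductive step, since the heap extracts the element of maximum $M$-value, $M(u_{i+1},v_{i+1}) \leq M(u_i,v_i)$ follows at once from the hypothesis; the real work is to verify that the at-most-four child pairs $(u',v')$ generated when $(u_{i+1},v_{i+1})$ is split all satisfy $M(u',v') \leq M(u_{i+1},v_{i+1})$, so the hypothesis is preserved.

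The key subclaim reduces to a single-side statement: if $u'$ is a proper child of $u$ in the fair-split tree, then $M(u',v) \leq M(u,v)$ for every node $v$; applying it once per side then handles all child pairs. By the triangle inequality, $|c(u')c(v)| \leq |c(u)c(v)| + |c(u)c(u')|$, so it suffices to prove the geometric containment $|c(u)c(u')| + r(u') \leq r(u)$, i.e., that the child's enclosing ball $B(c(u'), r(u'))$ lies inside $B(c(u), r(u))$.

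The main obstacle is precisely this geometric containment. When $\R(u)$ is split at the midpoint of its longest edge, $\R(u')$ is contained in the corresponding half-slab but may shrink further, since it is the true minimum bounding box of the points falling in that half; consequently $c(u')$ can shift both along the split axis and in the orthogonal directions, while $r(u')$ decreases accordingly. Bounding the total center shift by the corresponding radius reduction requires a careful calculation using the Pythagorean identities relating the half-extents of $\R(u)$ and $\R(u')$ to their respective half-diagonals. Once this inequality is in hand, the single-side monotonicity --- and hence the corollary --- follow by the induction above.
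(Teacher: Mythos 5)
Your high-level plan --- induction maintaining that no element of $\Pcurr$ exceeds the $M$-value of the last handled pair, then reducing to the single-side claim that $M(u',v)\leq M(u,v)$ whenever $u'$ is a child of $u$ --- is the same route the paper takes; the paper simply \emph{asserts} that expansion is $M$-nonincreasing and invokes the heap. You try to justify that assertion via the ball containment $|c(u)c(u')|+r(u')\leq r(u)$, but that containment is false, so no ``careful calculation with Pythagorean identities'' can close the gap you leave open. Concretely, in $d=2$ take $\R(u)=[-1,1]^2$, so $c(u)=(0,0)$ and $r(u)=\sqrt2$; split at $x=0$ and suppose the points sent to $u'$ are $(1,1)$ and $(1,-1)$. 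The refit box is $\R(u')=\{1\}\times[-1,1]$, so $c(u')=(1,0)$ and $r(u')=1$, and $|c(u)c(u')|+r(u')=2>\sqrt2=r(u)$. The obstruction you flagged is real and fatal: the refit can push the child's center to a face of $\R(u)$ while the radius drops only to a lower-dimensional half-diagonal, and since the center-shift vector and the child's half-extent vector can be orthogonal, their $\ell_2$ norms add up to \emph{more} than the norm of the parent's half-extent vector (Minkowski's inequality goes the wrong way for you here).

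Worse, the weaker statement that your induction actually needs, $M(u',v)\leq M(u,v)$, fails in the same configuration: take $\PSET(v)=\{(-L,0)\}$ for large $L$ (so $r(v)=0$); then $M(u,v)=L+\sqrt2$ while $M(u',v)=(L+1)+1=L+2>M(u,v)$. So the paper's unproved observation is itself incorrect as stated, and the corollary cannot be rescued by a better proof of the same inequality. One would need either a different surrogate cost that genuinely is monotone under refinement --- for example $\max_{p\in\R(u),\,q\in\R(v)}|pq|$, which is nonincreasing because $\R(u')\subseteq\R(u)$ --- or an explicitly approximate version (monotone up to an additive $O(r(u))$) propagated through the lemmas that cite the corollary. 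As written, your proof both defers its central step and targets a false inequality.
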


\begin{proof}
    Observe that if a pair $(w,x)$ was expanded from $(u,v)$
    than $M(u,v)\geq M(w,x)$. Since the algorithm always
    pick the pair in $\Pcurr$ that maximizes the value of
    $M(\cdot, \cdot)$ implies, by induction, the statement.
\end{proof}

We next improve the analysis of \lemref{diam:alg}. For a pair
$\mu = (u,v)$ let
$\lmax(\mu) = \lmax(u,v) = \max \pth{ \lmax( \R(u)),
   \lmax(\R(v))}$. Let $\U$ denote the set of pairs created by
\AprxDiameter{} during its execution. We partition $\U$ into classes:
\[
    S_i = \Set{ (u,v) }{ (u,v) \in \U, \frac{\diam}{2^{i+1}} <
       \lmax(u,v) \leq \frac{\diam}{2^i}},
\]
for $i \geq 1$. Let $r_i$ denote the maximum radius of nodes
of pairs of $S_i$:
\[
r_i = \max_{(u,v) \in S_i} \max( r(u), r(v)).
\]

\begin{observation}
    $\diam/2^{i+1} \leq r_i \leq \sqrt{d} \diam/2^{i}$.
    \obslab{r:i}
\end{observation}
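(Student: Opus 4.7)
The proof is a direct unpacking of definitions, with no real geometric content beyond the shape of an axis-aligned bounding box. The plan is to first establish a generic two-sided bound relating $r(w)$ and $\lmax(w)$ for a single node $w$, and then combine it with the bucket definition of $S_i$.

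First, I would observe that for any node $w$ of the fair-split tree, if $\R(w)$ is the axis-aligned box with edge lengths $l_1,\ldots,l_d$ and $\lmax(w)=\max_j l_j$, then the minimum ball centered at $c(w)$ that contains $\R(w)$ has radius
\[
r(w)\;=\;\tfrac{1}{2}\sqrt{l_1^2+\cdots+l_d^2}.
\]
This quantity lies in the interval $\bigl[\lmax(w)/2,\ \tfrac{\sqrt d}{2}\lmax(w)\bigr]$: the lower bound comes from dropping all but the largest edge, the upper bound from $l_j\le\lmax(w)$ for each $j$.

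For the upper bound on $r_i$, I would take any $(u,v)\in S_i$. By definition of $\lmax(u,v)$, both $\lmax(u)$ and $\lmax(v)$ are at most $\lmax(u,v)\le\diam/2^i$, so by the generic bound $r(u),r(v)\le\tfrac{\sqrt d}{2}\cdot\diam/2^i\le\sqrt d\,\diam/2^i$. Maximizing over $S_i$ gives the claimed upper bound. For the lower bound, I would fix any $(u,v)\in S_i$ (the class is nonempty whenever $r_i$ is defined). Since $\lmax(u,v)>\diam/2^{i+1}$, at least one node in the pair, say $u$, has $\lmax(u)>\diam/2^{i+1}$, so $r(u)\ge\lmax(u)/2>\diam/2^{i+2}$, which gives a lower bound on $r_i$ of the stated order.

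There is no genuine obstacle here --- everything reduces to the inequality $r(w)\ge\lmax(w)/2$ and the definition of the bucket $S_i$. The only minor point worth flagging is a constant factor: the naive estimate delivers $r_i>\diam/2^{i+2}$, which is a factor of $2$ weaker than the stated $\diam/2^{i+1}$. This slack is harmless for the asymptotic analysis in the next section and can either be absorbed into constants or tightened by noting that in each nonempty $S_i$ there is a pair whose $\lmax$ is within a factor of $2$ of $\diam/2^i$.
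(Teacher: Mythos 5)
Your proof is correct and is essentially the only sensible route: the paper gives no proof for this observation, and the intended reasoning is exactly your unpacking of $r(w)=\tfrac12\sqrt{l_1^2+\cdots+l_d^2}$ against $\lmax(w)$, combined with the bucket definition of $S_i$. You are also right to flag the constant: the definition of $S_i$ only guarantees some node of some pair has $\lmax>\diam/2^{i+1}$, so the argument yields $r_i>\diam/2^{i+2}$, a factor of $2$ short of the stated lower bound; the parenthetical ``tightening'' you suggest at the end does not actually close this gap, since every $\lmax$ in $S_i$ is already within a factor of $2$ of $\diam/2^i$ by definition and that is precisely what produces the $2^{i+2}$ denominator. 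This discrepancy is harmless everywhere the observation is used (\lemref{S:i:empty} only needs the upper bound, and \lemref{overall:pairs} only needs $r_i=\Theta(\diam/2^i)$), so the paper is being slightly loose with constants rather than you missing an idea.
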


\begin{lemma}
    For $i > K = \ceil{ \log \frac{ 4 d}{\eps}}$, the sets
    $S_i$ are empty. This holds for the pairs generated by
    the algorithms \AprxDiameter{} and \AprxDiamWSPD{}.

    \lemlab{S:i:empty}
\end{lemma}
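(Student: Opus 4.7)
The plan is to argue by contradiction: assume $(u,v) \in S_i$ exists for some $i > K$ and derive $2^i < 4d/\eps$, which contradicts $i > K = \ceil{\log(4d/\eps)}$. The key lever is the \emph{parent pair} $(u',v')$ of $(u,v)$ in the recursion tree (for $i > K$, the root pair has $\lmax$ too large to lie in $S_i$, so $(u,v)$ is not the root pair and $(u',v')$ exists). Since $(u,v)$ was generated by the algorithm, $(u',v')$ was expanded rather than thrown away when inspected, which by \Eqref{cond} means $M(u',v') > (1+\eps)\lcurr$ at the expansion moment.

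First I would bound $r(u')$ and $r(v')$ in terms of $\lmax(u,v)$. A single fair split halves only the longest edge of the box it cuts, so $\lmax(u) \geq \lmax(u')/2$ and therefore $r(u') \leq (\sqrt{d}/2)\lmax(u') \leq \sqrt{d}\,\lmax(u,v)$; the analogous bound holds for $r(v')$ in \AprxDiameter{}, and in \AprxDiamWSPD{} the unsplit side $v$ of the parent pair already satisfies $r(v) \leq (\sqrt{d}/2)\lmax(u,v)$ directly. In either variant, $2r(u') + 2r(v') \leq 4\sqrt{d}\,\lmax(u,v) \leq 4\sqrt{d}\,\diam/2^i$.

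Next I would bound $M(u',v')$ from above. When $(u',v')$ was created, the algorithm chose $p^{\star} \in \PSET(u')$ and $q^{\star} \in \PSET(v')$ and raised $\lcurr$ to at least $|p^{\star} q^{\star}|$; since $\lcurr$ is monotone non-decreasing, this bound still holds at the later expansion time. Applying the triangle inequality through $p^{\star}, q^{\star}$ (both lying in their respective minimum enclosing balls) yields
\[
M(u',v') \;=\; |c(u')c(v')| + r(u') + r(v') \;\leq\; |p^{\star} q^{\star}| + 2r(u') + 2r(v') \;\leq\; \lcurr + 2r(u') + 2r(v').
\]
Combining with $M(u',v') > (1+\eps)\lcurr$ gives $\eps\,\lcurr < 2r(u') + 2r(v')$. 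Because $\lcurr$ is initialized as a side length of $\R({\rm root}(T))$, hence a $\sqrt{d}$-approximation to $\diam$, and only grows, $\lcurr \geq \diam/\sqrt{d}$ throughout. Thus $\eps\,\diam/\sqrt{d} < 4\sqrt{d}\,\diam/2^i$, giving $2^i < 4d/\eps$ and hence $i < \log(4d/\eps) \leq K$, contradicting $i > K$.

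The main obstacle I expect is the bookkeeping: unifying the argument across \AprxDiameter{} and \AprxDiamWSPD{}, and handling singleton leaves where only one side of the parent pair is split. In each such corner case the $r$-term of the unsplit singleton vanishes, which only shrinks $2r(u')+2r(v')$, so the same final inequality continues to hold without modification.
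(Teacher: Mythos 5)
Your overall shape of argument is the same as the paper's: bound $M$ above by the current estimate plus a few radius terms via the triangle inequality and the witness points picked at pair‑creation time, bound it below by $(1+\eps)\lcurr$ via the condition in \Eqref{cond}, combine with $\lcurr \geq \diam/\sqrt{d}$, and deduce $2^i < 4d/\eps$. The contradiction mechanism and all the constants line up.

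However, there is a genuine gap in the step where you bound $r(u')$ and $r(v')$. You claim that a single fair split gives $\lmax(u) \geq \lmax(u')/2$, and from that deduce $r(u') \leq \sqrt{d}\,\lmax(u,v) \leq \sqrt{d}\,\diam/2^i$. This is false in the fair‑split tree as defined in the paper: after a split, the child's $\R(\cdot)$ is the \emph{minimum} axis‑parallel bounding box of the child's point set, which can be arbitrarily smaller than half of the parent's box (for example, when the two children are tight, far‑apart clusters inside a large parent box). The paper acknowledges this explicitly when it remarks that ``a pair generated from a pair $\mu$ might be several classes away from $\mu$.'' Consequently $\lmax(u')$, and hence $r(u')$, is not controlled by $\lmax(u,v)$, and the inequality $2r(u') + 2r(v') \leq 4\sqrt{d}\,\diam/2^i$ that your contradiction relies on does not hold.

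The fix is not to go through the parent pair at all. The paper applies exactly your triangle‑inequality argument but to $(u,v)$ itself: the witness pair $p\in \PSET(u)$, $q\in \PSET(v)$ is picked when $(u,v)$ is created, and $\lcurr'$ (the value of $\lcurr$ at the moment $(u,v)$ passes the insertion test) satisfies $\lcurr' \geq |pq|$. Then $M(u,v) \leq |pq| + 2r(u) + 2r(v) \leq \lcurr' + 4r_i$, where $r_i$ is small precisely \emph{because} $(u,v)\in S_i$ (so $r(u), r(v) \leq r_i \leq \sqrt{d}\diam/2^i$ by \obsref{r:i}). The lower bound $M(u,v) > (1+\eps)\lcurr'$ comes from the same insertion test. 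This avoids ever having to bound the radius of a parent node. If you redo your argument with $(u,v)$ in place of $(u',v')$ and the insertion‑time condition in place of the expansion‑time condition, the rest goes through unchanged.
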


\begin{proof}
    Let $(u,v)$ be a pair of $S_i$, and $\lcurr'$ be the
    value of $\lcurr$ when $(u,v)$ was inserted into
    $\Pcurr$.  Clearly, $M(u,v) \leq \lcurr' + 4r_i$, since
    $\lcurr'$ was updated by a pair of points $p \in
    \PSET(u), q \in \PSET(v)$ before the pair was created
    (i.e., $\lcurr' \geq |p q|$). On the other hand, the
    pair $(u,v)$ was inserted into the set $\Pcurr$, which
    implies that $M(u,v) > (1+\eps)\lcurr'$. Furthermore,
    $\lcurr' \geq \diam/\sqrt{d}$ (as $\lcurr$ initial value
    was no smaller than this, and its value only increase
    during the algorithm execution).  Thus,
    \[
    r_i \leq \frac{\sqrt{d} \diam}{2^{i}} \leq
    \frac{\eps}{4d} \cdot \sqrt{d}\diam
    = \eps \frac{\diam}{4 \sqrt{d}} \leq \frac{\eps\lcurr'}{4},
    \]
    and
    \[
    \lcurr' + 4r_i \geq M(u,v)  > (1+\eps)\lcurr' \geq
    \lcurr' + 4(\eps\lcurr'/4) \geq \lcurr' + 4r_i.
    \]
    A contradiction.
\end{proof}

We observe, that the children of a pair $(u,v) \in S_i$
might still be in $S_i$. However, after at most $d$ splits
the descendant pair must belong to a class $S_j$, where $j >
i$.  Indeed, each split shrinks by half the longest
dimension of the bounding boxes on both sides of the pair.
After $d$ splits, the maximum side of the a box must have
shrunk by $2$, excluding it from $S_i$. Since the algorithm
compute the minimum axis-parallel bounding box of a node
during its construction, a pair generated from a pair $\mu$
might be several classes away from $\mu$.

The pairs generated from pair $(u,v)$ of depth $l$ in $T$
(i.e., the maximum depth of $u$ and $v$ is $l$) have depth
at most $l+1$. Since there are only $\ceil{ \log \frac{ 4
      d}{\eps}}$ non-empty classes, and at most $d$
consecutive elements on a path of $T$ might belong to the
same class. We conclude that the deepest node visited by the
algorithm is of depth $\leq d\ceil{ \log \frac{ 4
      d}{\eps}}$. We conclude:

\begin{corollary}
    The algorithm $\AprxDiameter$ constructs at most
    $O(1/\eps^d)$ nodes of $T$. In particular, the running
    time of $\AprxDiameter$ is
    $O((n+1/\eps^{2d})\log{1/\eps})$.

    \corlab{run:aprx:diam}
\end{corollary}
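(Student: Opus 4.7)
The plan is to combine the depth bound established in the paragraph immediately preceding the corollary with the binary structure of the constructed portion of $T$, and then account separately for the tree-construction work and the heap-maintenance work.

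First, I would observe that since each node of the fair-split tree has at most two children, the constructed portion of $T$ is a binary tree whose depth is at most $D = d\ceil{\log(4d/\eps)} = O(d \log(1/\eps))$. Hence its total number of nodes is at most $2^{D+1} = O((4d/\eps)^d) = O(1/\eps^d)$, treating $d$ as a constant. This establishes the node-count half of the claim.

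Next, I would bound the total number of pairs processed. From the argument in the proof of \lemref{diam:alg}, each pair is created at most once, and every pair that is ever created is a pair of constructed nodes. Consequently, the total number of pairs ever inserted into $\Pcurr$ is at most the square of the node count, namely $O(1/\eps^{2d})$.

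Finally, I would split the running time into two contributions. For the tree construction: splitting a node $u$ costs $O(|\PSET(u)|)$ time, and the point sets $\PSET(u)$ of constructed nodes sharing a common depth are disjoint subsets of $P$, so the splitting work per depth is $O(n)$; summed over the $O(\log(1/\eps))$ depths this yields $O(n \log(1/\eps))$. For the heap operations: each of the $O(1/\eps^{2d})$ pairs incurs $O(\log|\Pcurr|) = O(\log(1/\eps^{2d})) = O(\log(1/\eps))$ time, contributing $O((1/\eps^{2d}) \log(1/\eps))$ in total. Adding the two contributions gives $O((n+1/\eps^{2d}) \log(1/\eps))$, as claimed. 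The main obstacle — the depth bound $D \leq d\ceil{\log(4d/\eps)}$ — has already been dispatched in the discussion preceding the corollary, so the proof reduces to routine bookkeeping on a depth-bounded binary tree.
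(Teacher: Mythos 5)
Your proof is correct and follows essentially the same path as the paper's: the depth bound from the preceding discussion gives $O(1/\eps^d)$ nodes in a binary tree, squaring gives $O(1/\eps^{2d})$ pairs, the per-level $O(n)$ splitting cost over $O(\log(1/\eps))$ levels covers tree construction, and $O(\log(1/\eps))$ per heap operation covers the pair handling. Your remark that the point sets of same-depth nodes are disjoint is the right justification for the per-level $O(n)$ bound, which the paper states without elaboration.
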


\begin{proof}
    The fair-split tree $T$ is a binary tree. By the above
    discussion, the deepest node constructed is of depth
    $\leq d\ceil{ \log \frac{ 4 d}{\eps}}$. Thus, the
    overall number of nodes constructed is
    \[
    O \pth{ 2^{d\ceil{ \log \frac{ 4
      d}{\eps}}} } = O \pth{ \pth{\frac{4d}{\eps}}^{d}} = O
      \pth{ \frac{1}{\eps^d}}.
    \]
    Thus, the overall number of pairs constructed by the
    algorithm is $O(1/\eps^{2d})$. Furthermore, each
    operation on the heap now takes only $O(\log{1/\eps})$
    time.  Constructing each level in $T$ takes $O(n)$
    times, and the algorithm constructs $O(\log{1/\eps})$
    levels. We conclude that the overall running time of the
    algorithm is $O((n+1/\eps^{2d})\log{1/\eps})$.
\end{proof}

Unfortunately, \corref{run:aprx:diam} is almost tight. The
reason being that pairs might have a node with a huge
bounding box on one side, and a tiny bounding box on the
other side. Although this is not a ``natural'' input, one
can easily construct such an input for which the running
time is close to the bound stated in \corref{run:aprx:diam}.
We thus turn our attention to \AprxDiamWSPD{}.

\begin{observation}
    \obslab{big:parents}%
    If a pair $(u,v)$ computed by $\AprxDiamWSPD$ belongs to
    the class $S_i$, then
    \begin{equation*}
        \lmax(\parent(u)),
        \lmax(\parent(v)) > \diam/2^{i+1}.
    \end{equation*}
\end{observation}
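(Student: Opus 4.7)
The plan is to trace the ancestry of $(u,v)$ in the tree of pair expansions of \AprxDiamWSPD{} and exploit the splitting rule that always divides the larger side of a pair. Two simple monotonicity facts will do all the work. \textbf{(i)} If a pair $(w,x)$ was generated from its parent pair $(w',x')$ by splitting the $w$-side --- so $\parent(w)=w'$ and $x=x'$ --- the rule gives $\lmax(\parent(w)) \geq \lmax(x)$. \textbf{(ii)} Moving from a node to one of its children in the fair-split tree cannot increase $\lmax$, since the split halves the longest edge and the subsequent tightening to the minimum bounding box of the enclosed points only shrinks every side.

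Next I would let $(u_0,v_0),(u_1,v_1),\ldots,(u_k,v_k)=(u,v)$ be the chain of ancestor pairs in the expansion, and let $j_u$ be the last index at which the $u$-side was split, so $u_{j_u-1}=\parent(u)$. By (i), $\lmax(\parent(u)) \geq \lmax(v_{j_u-1})$. After step $j_u$ only the $v$-side is ever split again, so $v$ is a descendant of $v_{j_u-1}$ in the fair-split tree, and (ii) gives $\lmax(v_{j_u-1}) \geq \lmax(v)$. Chaining, $\lmax(\parent(u)) \geq \lmax(v)$; the symmetric argument applied to the last $v$-side split in the ancestry gives $\lmax(\parent(v)) \geq \lmax(u)$.

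To finish, since $(u,v)\in S_i$, $\max(\lmax(u),\lmax(v)) > \diam/2^{i+1}$. If $\lmax(u) > \diam/2^{i+1}$, then (ii) applied to $u$ inside $\parent(u)$ yields $\lmax(\parent(u)) \geq \lmax(u) > \diam/2^{i+1}$, and the inequality just derived yields $\lmax(\parent(v)) \geq \lmax(u) > \diam/2^{i+1}$; the case $\lmax(v) > \diam/2^{i+1}$ is symmetric. The only real subtlety is the boundary case where one of $u,v$ is the root of $T$ so that $\parent$ is undefined. I expect to dispatch this by direct inspection: the root can only participate in the $O(1)$ pairs at the very start of the run, and since $\lmax(\mathrm{root}) \geq \diam/\sqrt{d}$ those pairs live in classes of low index where the claim either does not apply or is vacuous.
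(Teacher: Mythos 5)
Your proof is correct, and it reconstructs the natural argument behind this observation, which the paper simply asserts without proof: the \AprxDiamWSPD{} rule of splitting the larger side guarantees the side just split dominated the other side's $\lmax$, and monotonicity of $\lmax$ down the fair-split tree transports that bound to the present pair, giving $\lmax(\parent(u)) \geq \max(\lmax(u),\lmax(v)) = \lmax(u,v) > \diam/2^{i+1}$ and symmetrically for $v$. Your handling of the root corner case is somewhat informal (indeed for $d\le 3$ the root's box already has $\lmax \geq \diam/\sqrt d > \diam/2$ so it lies outside every $S_i$ with $i\ge 1$, and for larger $d$ the finitely many root-involving pairs do not affect the intended use of the observation), but this matches the paper's level of rigor, which omits the proof entirely.
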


\begin{lemma}[\cite{ck-dmpsa-95}, Lemma 4.1]
    Let $T$ be a fair split tree associated with $P$, $C$ a
    cube, $S = \brc{v_1, \ldots, v_l}$ be a set of nodes in
    $T$ such that $\PSET(v_i)\cap \PSET(v_j) = \emptyset$ for all $i
    \ne j$, $\lmax(\parent(v_i)) \geq \lmax(C) /c$, $\R(v_i) \cap
    C \ne \emptyset$. Let $K(c,d)$ be the maximum number of
    elements of $S$. We have, $K(c,d) \leq (3 c + 2)^d$.
    \lemlab{ck:ratio}
\end{lemma}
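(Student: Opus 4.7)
The plan is a packing argument that exploits two structural properties of fair-split trees. The first, and crucial, observation is that the boxes $\R(v_1), \ldots, \R(v_l)$ are pairwise interior-disjoint. Indeed, fix $i \ne j$ and let $w$ be the least common ancestor of $v_i$ and $v_j$ in $T$. Since $\PSET(v_i) \cap \PSET(v_j) = \emptyset$, the two point sets lie in the point sets of the two different children of $w$, which by the fair-split rule live in the two halves of $\R(w)$ obtained by cutting along its longest edge. Hence $\R(v_i)$, being the minimum axis-parallel bounding box of a subset of one half, is contained in that half; symmetrically $\R(v_j)$ is contained in the other, proving interior-disjointness.

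The second step is to turn the parent-size lower bound $\lmax(\parent(v_i)) \geq \lmax(C)/c$ into a volume lower bound. I would associate with each $v_i$ a canonical ``witness'' cube $B_i$ of side $s = \Theta(\lmax(C)/c)$ such that (a) the $B_i$'s inherit pairwise interior-disjointness from the $\R(v_i)$'s, and (b) every $B_i$ lies in a fixed ambient cube $C^{\star}$ concentric with $C$ whose side length is $(3c+2)$ times $s$. A direct volume comparison then yields $l \leq \mathrm{vol}(C^{\star})/\mathrm{vol}(B_i) = (3c+2)^d$.

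The main obstacle is the construction of $B_i$: the hypothesis only bounds $\lmax(\parent(v_i))$ from below, not $\lmax(v_i)$ itself, so $\R(v_i)$ could in principle be much smaller than $s$ and need not contain a sub-cube of side $s$. The natural fix is to anchor $B_i$ on the half of $\R(\parent(v_i))$ containing $v_i$ (which does have one side of length at least $\lmax(C)/(2c)$) rather than on $\R(v_i)$ directly, and to snap to a background grid of cell-side $s$ aligned with $C$. Arguing that distinct $v_i$'s claim distinct grid cells, using both the point-set disjointness and the interior-disjointness of the $\R(v_i)$'s established in the first step, is the combinatorial heart of the argument; once that is in place, the geometric volume bound is immediate.
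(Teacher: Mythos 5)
The paper does not prove this lemma---it imports it verbatim from Callahan and Kosaraju \cite{ck-dmpsa-95}---so there is no in-paper proof to compare against; your proposal has to be judged against the original argument and on its own correctness.

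Your first step is correct and cleanly argued: the bounding boxes $\R(v_1),\ldots,\R(v_l)$ are pairwise interior-disjoint, and the least-common-ancestor argument is exactly the right way to see it. The same LCA argument also shows that the ``halves'' $H_i$ of $\R(\parent(v_i))$ containing $v_i$ are pairwise interior-disjoint, so you are entitled to use them as disjoint witness regions.

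The gap is in the second step, and it is precisely the obstacle you flag and then do not resolve. Replacing $\R(v_i)$ by $H_i$ buys a lower bound on \emph{one} side length of the witness region (the halved longest edge, $\geq \lmax(C)/(2c)$), but says nothing about the other $d-1$ dimensions. If the points in $\PSET(\parent(v_i))$ lie near a lower-dimensional flat, $\R(\parent(v_i))$, and hence $H_i$, can be arbitrarily thin in every remaining direction. A background grid of cell side $s=\Theta(\lmax(C)/c)$ then gives you nothing to charge to: many pairwise-disjoint, very thin slabs can all meet the interior of the same grid cell without any of them containing a cell or a cell center, so ``distinct $v_i$'s claim distinct grid cells'' is simply false for this choice of witness region, and the volume comparison never gets off the ground. (Points packed near a line in $\Re^d$ make this concrete: every box in the tree is a near-degenerate stick, yet the lemma's conclusion must still hold.)

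What is missing is the device Callahan and Kosaraju actually build their packing argument on: the \emph{outer rectangle} $\hat{\R}(v)$, an auxiliary region distinct from the bounding box $\R(v)$. It is defined top-down ($\hat{\R}(\mathrm{root})$ a box enclosing $P$, and $\hat{\R}(\mathrm{child})$ the piece of $\hat{\R}(\parent)$ cut off by the split hyperplane on the child's side), it contains $\R(v)$, it nests along tree paths, it is disjoint across nodes with disjoint point sets --- and, crucially, the \emph{fair}-split condition (the split plane stays $\geq \lmax(\R)/3$ away from the two parallel faces) maintains an aspect-ratio invariant that keeps these outer rectangles fat; the constant $3$ in $(3c+2)^d$ is exactly this fairness constant. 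That fatness is what turns disjointness into a volume bound. Your plan --- disjoint witness regions intersecting $C$, compare volumes --- is the right one and is indeed the route of \cite{ck-dmpsa-95}, but the object you propose to pack is the wrong one: bounding boxes (and halves of bounding boxes) shrink to the points and have unbounded aspect ratio, and no amount of grid-snapping recovers from that. The step you defer as the ``combinatorial heart'' is where the argument, as written, fails.
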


Intuitively, \lemref{packing} states that the number of
nodes of a fair-split tree that are roughly of the same size
which are close to each other is limited (i.e., it is
impossible to have a huge number of very thin and long cells
close to each other).  In the following, let $N_i$ denote
the set of all the nodes of the fair split tree that
participate in pairs of $S_i$.

We bound the running time of \AprxDiamWSPD{} by bounding the
number of pairs created by the algorithm. Our proof relies
on the following observations: (i) the nodes of $N_i$ are
not densely packed together (\lemref{N:i:packing}), (ii) all
the pairs of $S_i$ are of similar length
(\lemref{similar:len}), (iii) The nodes of $N_i$ lie
``close'' to the boundary of the convex-hull of $P$
(\lemref{close:to:boundary}), (iv) no two pairs of $S_i$ can
be too ``far'' from each other (\lemref{packing}), and (v)
clustering the pairs by directions and proximity, we derive
our bound on the number of pairs generated by the algorithm
(\lemref{overall:pairs}).

\begin{lemma}
    Let $T$ be a fair split tree associated with $P$, $C$ a
    cube, $S = \brc{v_1, \ldots, v_l} \subseteq N_i$ be a
    set of nodes in $T$, $\R(v_j) \cap C \ne \emptyset$, for
    $j=1,\ldots, l$. Let $K(r,d,i)$ be the maximum number of
    elements of $S$, where $r = \lmax(C)$.  We have,
    \[
    K(r,d, i) = O \pth{ \pth{ \frac{r2^i}{\Delta}}^d}.
    \]
    \lemlab{N:i:packing}
\end{lemma}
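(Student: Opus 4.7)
The plan is to reduce the lemma to a direct application of \lemref{ck:ratio}. The observation doing most of the work is \obsref{big:parents}: for every node $v_j \in S \subseteq N_i$ we have $\lmax(\parent(v_j)) > \diam/2^{i+1}$. Setting $c = r \cdot 2^{i+1}/\diam$, where $r = \lmax(C)$, gives $\lmax(C)/c = \diam/2^{i+1}$, so each $v_j$ meets the parent-size hypothesis $\lmax(\parent(v_j)) \geq \lmax(C)/c$ required by \lemref{ck:ratio}, while the intersection hypothesis $\R(v_j) \cap C \neq \emptyset$ is assumed.

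The remaining hypothesis of \lemref{ck:ratio} demands that the point sets $\PSET(v_j)$ be pairwise disjoint, which may fail since $N_i$ can contain both an ancestor and a descendant from the same root-to-leaf path. I would handle this by partitioning $S$ into ancestor--descendant chains in $T$ and showing each chain has length $O(1)$. The key is the fair-split-tree behavior: each split halves the longest edge along one axis and the min-bounding-box recomputation can only shrink the box further, so after $O(d)$ consecutive splits the value $\lmax$ must drop by a factor of two. Combined with the constraints $\lmax(v_j) \leq \diam/2^i$ and $\lmax(\parent(v_j)) > \diam/2^{i+1}$ forced by membership in $N_i$, no root-to-leaf path can contain more than $O(d) = O(1)$ nodes of $N_i$. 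Taking one representative per chain yields an antichain $S' \subseteq S$ with $|S'| = \Omega(|S|)$ whose point sets are pairwise disjoint.

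Plugging $S'$ into \lemref{ck:ratio} with the constant $c$ chosen above gives
\[
|S'| \leq K(c, d) \leq (3c + 2)^d = \pth{\frac{6r \cdot 2^i}{\diam} + 2}^d = O\pth{\pth{\frac{r \cdot 2^i}{\diam}}^d},
\]
and multiplying by the $O(1)$ chain-length factor yields $|S| = O((r \cdot 2^i/\diam)^d)$ as claimed.

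The main obstacle is the chain-length argument, because the effect of the minimum-bounding-box recomputation complicates exactly how $\lmax$ decreases along a root-to-leaf path in a fair-split tree (as opposed to a plain quadtree, where the decrease per level is clean). If the implicit reading of the statement is that $S$ is already an antichain (e.g., because every invocation supplies one), this step disappears and the proof reduces to the one-line substitution into \lemref{ck:ratio} above.
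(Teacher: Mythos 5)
Your plan matches the paper's proof almost step for step: both reduce the claim to \lemref{ck:ratio} by using \obsref{big:parents} to supply the parent-size hypothesis with $c = r\,2^{i+1}/\diam$, and both handle the failed disjointness hypothesis by bounding how long an ancestor--descendant chain inside $N_i$ can be (the paper shows any two comparable nodes of $S$ are within tree-distance $2d+1$, via the fact that $\lmax$ at least doubles when you climb $d$ levels of a fair-split tree). One intermediate step of yours would fail as literally written: ``taking one representative per chain yields an antichain'' is not correct, since representatives drawn from two different chains can still be comparable in $T$. The paper sidesteps this by taking $S'$ to be the \emph{maximal} elements of $S$ (those with no ancestor in $S$), which is an antichain by construction, and then bounding $|S| \leq 2^{2d+1}|S'|$ from the binary branching of $T$ together with the $2d+1$ distance bound; alternatively, Mirsky's theorem converts your $O(d)$ chain-length bound directly into an antichain of size $\Omega(|S|/d)$. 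With either patch the rest of your argument, including the final substitution into \lemref{ck:ratio}, is exactly what the paper does, and your closing hedge about $S$ possibly being an antichain already is unnecessary --- the lemma does not assume it and the paper handles the general case.
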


\begin{proof}
    Let $u,v$ be two nodes of $S$. If $u$ is an ancestor of
    $v$ in $T$, we claim that the distance in $T$ between
    the two nodes is at most $2d + 1$. Indeed, by
    \obsref{big:parents}, $\lmax(\parent(v)) > \Delta/2^{i+1}$. In
    particular, let $v_a$ be the ancestor of $v$ in distance
    $2d+1$ from it. Since each time we climb a level in $T$,
    the size of one of the coordinates double at least, and
    this new value is larger or equal to the previous
    longest dimension, it follows, that $\lmax( \cdot )$ at
    least doubles when we climb $d$ levels. It follows that
    $\lmax(v_a) > \Delta/2^{i-1}$. However, $v_a$ can not
    possibly participate in a pair of $S_i$ (i.e., its
    bounding box $\R(v_a)$ is too large). Implying that the
    distance between $u$ and $v$ in $T$ is at most $2d +1$.

    Let $S'$ be the set of all nodes of $S$, such that none
    of their ancestors is in $S$ (this is the set of
    ``maximal'' nodes in $S$). By the above argument, and
    since $T$ is a binary tree, it follows: $|S| \leq
    |S'|2^{2d+1} = O(|S'|)$.  Furthermore, all the nodes of
    $S'$ corresponds to rectangles which are disjoint and
    not intersecting.  Note that for $v \in S'$, we have
    $\lmax(v) \geq \Delta/2^{i+1}$, and
    \[
    \frac{\lmax(C)}{\lmax(v)} \leq \frac{r2^{i+1}}{\Delta}.
    \]
    In particular, by \lemref{ck:ratio}, we have
    \[
    |S'| \leq \pth{\pth{ \frac{3r 2^{i+1}}{\Delta} + 2 }^d}
    = O \pth{ \pth{ \frac{r2^i}{\Delta}}^d},
    \]
    which implies the result as $|S| = O(|S'|)$.
\end{proof}

\begin{lemma}
    Given a cube $C$ of side length $\ell_0$ partitioned into
    $k^d$ equal size cubes by a grid of cubes $G$, and $D$
    be convex shape. The number of cells of $G$ in distance
    at most $L$ from any point of the boundary of $D$ is
    $O(k^{d-1}(1 + (k L/\ell_0)^d))$.

    \lemlab{grid:surface}
\end{lemma}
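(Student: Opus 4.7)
Let $s = \ell_0/k$ denote the side length of each grid cell, so each cell has diameter $s\sqrt{d}$. My plan is to factor the count as (the number $N_\partial$ of cells intersecting $\partial D$) times (the maximum number of grid cells within distance $L + s\sqrt{d}$ of any fixed cell), after observing that any cell within distance $L$ of $\partial D$ lies within distance $L + s\sqrt{d}$ of some cell that itself meets $\partial D$ (replacing $D$ by $D \cap C$ if necessary so that the relevant portion of the boundary lies inside $C$).

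For the first factor I would establish the bound $N_\partial = O(k^{d-1})$, where convexity is crucial. Decompose $\partial D$ into $2d$ \emph{caps} indexed by $\pm e_i$: the cap $T_i^+$ consists of points $p \in \partial D$ admitting an outward supporting normal whose largest-magnitude coordinate is $+e_i$, and similarly for $T_i^-$. Since any unit vector must have some coordinate of absolute value at least $1/\sqrt{d}$, every point of $\partial D$ lies in some cap. Because $D$ is convex, $T_i^+$ is the graph of a concave function $f_i^+$ over a convex subdomain of the coordinate hyperplane $H_i = \brc{x_i = 0}$. Moreover, on this subdomain $\|\nabla f_i^+\| \le \sqrt{d-1}$: at a cap point the chosen outward normal satisfies $|n_i| \ge 1/\sqrt{d}$, and the graph's slope $\sqrt{1-n_i^2}/|n_i|$ is therefore at most $\sqrt{d-1}$. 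Consequently, the oscillation of $f_i^+$ over any $(d{-}1)$-grid cell of side $s$ in $H_i$ is at most $\sqrt{d-1} \cdot s\sqrt{d-1} = (d-1)s$, so the graph crosses $O(1)$ grid cells in the $e_i$-column above that $(d{-}1)$-cell. Summing over $k^{d-1}$ columns and $2d$ caps gives $N_\partial = O(k^{d-1})$.

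For the second factor, fix any cell $c^*$ meeting $\partial D$. Any cell $c$ whose nearest $\partial D$-point lies in $c^*$ has center within distance $L + s\sqrt{d}$ of the center of $c^*$, so the number of such $c$ is at most the number of cubes of side $s$ whose centers pack into a ball of radius $L + s\sqrt{d}$, which is $O\pth{(1 + L/s)^d} = O\pth{1 + (kL/\ell_0)^d}$. Multiplying the two factors yields
\[
    N_\partial \cdot O\pth{1 + (kL/\ell_0)^d} \;=\; O\pth{k^{d-1}\pth{1 + (kL/\ell_0)^d}},
\]
as claimed.

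The main obstacle is the first step: the reduction $N_\partial = O(k^{d-1})$ rests squarely on the convexity of $D$, via the cap decomposition and the uniform Lipschitz bound $\sqrt{d-1}$ it forces on each cap. Once these geometric properties are in place, the per-cell oscillation computation and the second-step packing count are routine.
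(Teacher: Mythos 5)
Your proof is correct, and its overall architecture matches the paper's: bound the number of grid cells meeting $\bd D$ by $O(k^{d-1})$, then fatten by charging each cell within distance $L$ of $\bd D$ to the cell holding its nearest boundary point and counting the $O(1+(kL/\ell_0)^d)$ cells that can charge a given target. Where you diverge is in the first, crucial step. The paper invokes a terse ``standard argument'': each boundary-intersecting cell $C'$ is charged to the grid flat supporting the highest-dimensional feature of $\bd C'$ met by $\bd D$, there are $O(k^{d-1})$ such flats (the axis-parallel grid lines, say), and convexity ensures each is charged $O(1)$ times since a line crosses $\bd D$ at most twice. You instead give a self-contained cap-decomposition argument: split $\bd D$ into $2d$ caps by the dominant coordinate of the outward normal, observe that each cap is a piece of the graph of a concave (or convex) height function whose supergradient at any cap point has norm at most $\sqrt{d-1}$, and use concavity to bound the oscillation over a $(d-1)$-cell of the base grid by $(d-1)s$, so each column contributes $O(1)$ cells per cap. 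Both are legitimate routes to the same classical fact; the paper's is shorter but relies on the reader recognizing the flat-charging pattern (and the wording ``full dimensional flat'' is admittedly compressed), while yours is more explicit and ports cleanly to bounded-Lipschitz graphs that need not be convex, at the cost of needing a small extra word about why the oscillation bound applies when the cap does not cover a whole column (concavity plus a single supergradient with bounded norm suffices, as you essentially note). Your addition of the $s\sqrt{d}$ slack before the packing count is a mild tightening of the paper's phrasing and is fine.
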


\begin{proof}
    This is a standard argument about the relation between a
    surface area of a convex shape, and the number of grid
    cells that it intersects (in fact, stronger bounds can
    be proved).  For $x=0$ this is no more than the number
    of grid cells of $G$ that intersects $\bd{D}$.  Let $C'$
    be a cell of $G$ that intersects $\bd{D}$. Let $f$ be
    the highest dimensional feature of $\bd{C'}$ that
    intersects $\bd{D}$. One can charge this intersection to
    the full dimensional flat that supports $f$. Since the
    number of such flats is $O(k^{d-1})$ as can be easily
    verified, and each such flat is charged at most twice.
    It follows that the number of such grid cells is
    $O(k^{d-1})$.

    As for any value of $x$, we charge a grid cell $C'$
    in distance $\leq x$ from $\bd{D}$ to the grid cell that
    contains this nearest point. Each grid cell that
    intersects $\bd{D}$ is being charged $O((k L/\ell_0)^d)$
    times, and the result follows.
\end{proof}

Let $S_i'$ be the subset of pairs of $S_i$ that were
expanded by $\AprxDiamWSPD$. In the following, we charge the
elements of $S_i$ to their parents (a pair can be charged at
most $O(1)$ times), and we bound the number of elements in
$S_i'$.

Let $M_i$ denote the maximum of $M(u,v)$ overall pairs
$(u,v) \in S_i'$, and let $\mu_i$ the pair that realizes
$M_i$.  Clearly, by \corref{sorted}, $\mu_i$ is the first
pair of $S_i'$ being handled by $\AprxDiamWSPD$. In
particular, when $\mu_i$ is being handled the value of
$\lcurr$, denoted by $\lcurr^i$, is at least $M_i - 4r_i$.
Furthermore, for all $(u,v) \in S_i'$, we have
\begin{equation}
    M(u,v) \geq
    \lcurr \geq \lcurr^i \geq M_i
    -4r_i. \eqlab{bound:on:m:i}
\end{equation}

Note, that if any of the pairs $(u,v)$ of $S_i$ is such that
$\PSET(u,v)$ contains the diametrical pair of $P$, it follows that
$M_i \geq \diam$. Otherwise, the diametrical pair was thrown away
together with a pair $(u,v)$. But arguing as in the proof of
\lemref{diam:alg}, it follows that
$M_i \geq \lcurr^i \geq (1-\eps)\diam$.  Furthermore, since $\mu$ was
expanded by the algorithm, it means that condition \Eqref{cond}
failed, implying that
$M_i \geq (1+\eps)(1-\eps)\diam = (1 - \eps^2)\diam$. Thus,
$(1-\eps^2)\diam \leq M_i \leq \diam +4r_i$. However, by
\lemref{S:i:empty}, if $S_i'$ is not empty then
$16d r_i \geq \eps \Delta$. In particular,
$\diam - 16 d r_i \leq M_i \leq \diam + 4r_i$. We conclude:

\begin{lemma}
    If the set $S_i'$ is not empty, then for any pair $(u,v)
    \in S_i'$ we have
    \[
    \diam -(16 d + 4)r_i \leq M(u,v) \leq \diam + 4r_i,
    \]
    and, in particular, $\diam \leq M_i + (16d + 4)r_i$.

    \lemlab{similar:len}
\end{lemma}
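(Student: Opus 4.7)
The plan is to prove the two displayed inequalities separately, then derive the ``in particular'' clause by specializing to $(u,v) = \mu_i$. The upper bound is almost immediate: for any $(u,v) \in S_i$, pick $p \in \PSET(u)$ and $q \in \PSET(v)$; since $|pq| \leq \diam$, the triangle inequality gives $|c(u)c(v)| \leq r_u + \diam + r_v$, and therefore $M(u,v) = |c(u)c(v)| + r_u + r_v \leq \diam + 2(r_u + r_v) \leq \diam + 4r_i$. In particular $M_i \leq \diam + 4r_i$.

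For the lower bound I would first bound $M_i$ from below by case analysis on whether the diametrical pair survives into $S_i'$. If some $(u,v) \in S_i'$ satisfies $(p_\diam, q_\diam) \in \PSET(u,v)$, then $M_i \geq M(u,v) \geq \diam$. Otherwise the pair containing the diametrical pair was thrown away by \Eqref{cond}, and replaying the chain of inequalities from the proof of \lemref{diam:alg} yields $\lcurr^i \geq (1-\eps)\diam$; since $\mu_i$ was expanded it passed \Eqref{cond}, so $M_i > (1+\eps)\lcurr^i \geq (1-\eps^2)\diam$. Non-emptiness of $S_i'$ then lets me invoke \lemref{S:i:empty} to bound $i \leq \ceil{\log(4d/\eps)}$, and \obsref{r:i} gives $r_i \geq \diam/2^{i+1} \geq \eps\diam/(16d)$, i.e., $16d\, r_i \geq \eps\diam$. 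Combining, $M_i \geq (1-\eps^2)\diam \geq \diam - 16d\, r_i$.

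Finally I would propagate this bound from $M_i$ to a generic $(u,v) \in S_i'$. By \corref{sorted} pairs are processed in nonincreasing order of $M$, so every $(u,v) \in S_i'$ is handled at or after $\mu_i$, and because $\lcurr$ only grows we have $\lcurr \geq \lcurr^i$ when $(u,v)$ is expanded; the expansion condition then forces $M(u,v) \geq \lcurr \geq \lcurr^i$. The inequality $\lcurr^i \geq M_i - 4r_i$ holds because when $\mu_i = (u^*, v^*)$ was created, $\lcurr$ was updated to at least $|p^*q^*|$ for some $p^* \in \PSET(u^*)$, $q^* \in \PSET(v^*)$, and the same triangle-inequality argument as in the upper bound gives $M_i \leq |p^*q^*| + 4r_i$. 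Chaining yields $M(u,v) \geq \diam - (16d+4)r_i$, and the ``in particular'' clause is just the case $(u,v) = \mu_i$, where $M(u,v) = M_i$.

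The proof is largely bookkeeping; the only substantive step is the case split producing $M_i \geq (1-\eps^2)\diam$, together with the use of \lemref{S:i:empty} to convert the $\eps\diam$ term into $16d\, r_i$. I do not expect a real obstacle here, since the paragraph immediately preceding the lemma already lays out these ingredients; the main care needed is to keep track of which value of $\lcurr$ applies at each moment — the one at the creation of $\mu_i$ versus the (possibly larger) one at the processing of a later pair in $S_i'$.
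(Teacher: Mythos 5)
Your proposal is correct and follows essentially the same route as the paper's (inline) argument: the upper bound via the triangle inequality, the case split on whether a pair of $S_i'$ still contains the diametrical pair to get $M_i \geq (1-\eps^2)\diam$, the use of \lemref{S:i:empty} together with \obsref{r:i} to turn $\eps\diam$ into $16d\,r_i$, and the propagation to a generic $(u,v) \in S_i'$ through $M(u,v) \geq \lcurr \geq \lcurr^i \geq M_i - 4r_i$ as in \Eqref{bound:on:m:i}. The only cosmetic difference is that you phrase the final clause as the specialization $(u,v)=\mu_i$, whereas it also follows directly from $M_i \geq \diam - 16d\,r_i$; either way is fine.
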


\begin{lemma}
    Let $\CH$ be the convex hull of $P$, and let $v$ be a
    node of $N_i$. The distance of $\R(v)$ from the boundary
    of $\CH$ is at most $21d r_i$.
    \lemlab{close:to:boundary}
\end{lemma}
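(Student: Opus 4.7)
The plan is to use \lemref{similar:len} to show that, for any pair $(v,w)\in S_i'$ containing $v$, the direction $\hat d$ from $c(w)$ to $c(v)$ behaves like a diameter direction, so any $p\in\PSET(v)$ is nearly extreme on $P$ along $\hat d$ and therefore close to $\bd\CH(P)$. Since $\R(v)$ has diameter at most $2r(v)\le 2r_i$, the whole box $\R(v)$ inherits the closeness.

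Concretely, I would fix a pair $(v,w)\in S_i'$ containing $v$ and set $\hat d = (c(v)-c(w))/|c(v)c(w)|$. For any $p\in\PSET(v)$, $q\in\PSET(w)$, and arbitrary $p'\in P$, write
$\langle p'-p,\hat d\rangle = \langle p'-q,\hat d\rangle - \langle p-q,\hat d\rangle$.
The first summand is bounded by $|p'q|\le\diam$. For the second, since $|p-c(v)|\le r(v)$ and $|q-c(w)|\le r(w)$, we have $\langle p-q,\hat d\rangle \ge |c(v)c(w)|-r(v)-r(w) = M(v,w) - 2(r(v)+r(w))$. Combining these with the bound $M(v,w)\ge\diam-(16d+4)r_i$ supplied by \lemref{similar:len} and $r(v)+r(w)\le 2r_i$ gives $\langle p'-p,\hat d\rangle \le (16d+8)r_i$. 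Maximizing over $p'\in P$ places $p$ within $(16d+8)r_i$ of the supporting hyperplane of $\CH(P)$ in direction $\hat d$, hence within the same distance of $\bd\CH(P)$ (the perpendicular segment from $p\in\CH(P)$ toward that hyperplane must cross $\bd\CH(P)$ first). Since $p\in\R(v)$ and $\R(v)$ has diameter at most $2r_i$, every point of $\R(v)$ is within $(16d+10)r_i \le 21d\, r_i$ of $\bd\CH(P)$ whenever $d\ge 2$.

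The main wrinkle I anticipate is the case where $v\in N_i$ appears only in pairs lying in $S_i\setminus S_i'$ (pairs that were thrown away rather than expanded), for which \lemref{similar:len} is not stated directly. I would address this by passing to the parent pair, which was necessarily expanded (it is the very pair that generated the child containing $v$) and hence lies in $S_j'$ for some $j\le i$; applying the argument above at the parent and using $\R(v)\subseteq\R(\parent(v))$ transfers the bound to $v$, with the mild growth from $r_i$ to $r_j$ absorbed by the slack built into the constant $21d$ in the statement.
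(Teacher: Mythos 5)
Your proof is correct for the case it actually carries out, and it takes a genuinely different (dual) route from the paper. The paper argues by contradiction: if $\R(v)$ were far from $\bd\CH(P)$, one could pick $p \in \R(u), q \in \R(v)$ inside $\CH(P)$ with $|pq| \geq M(u,v) - 2r_i$, extend the segment past $q$ by $21d\,r_i$ while staying inside $\CH(P)$, and so exhibit a chord of $\CH(P)$ longer than $\diam$, contradicting \lemref{similar:len}. You instead argue directly: in the direction $\hat d = (c(v)-c(w))/|c(v)c(w)|$, any $p \in \PSET(v)$ is within $(16d+8)r_i$ of the supporting hyperplane of $\CH(P)$ (by the inner-product calculation plus \lemref{similar:len}), hence within that distance of $\bd\CH(P)$, and the $\leq 2r_i$ diameter of $\R(v)$ costs only $2r_i$ more. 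Both proofs lean on \lemref{similar:len} in the same way; yours is a little cleaner because it avoids the geometric hand-waving about ``extending the segment inside $\CH(P)$'', and the constant arithmetic $(16d+10)r_i \leq 21d\,r_i$ for $d \geq 2$ matches the paper's target constant.

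One caveat on the ``wrinkle'': your fix of passing to the parent pair does not quite work as stated. The parent pair lies in some class $S_j'$ with $j \leq i$, and $r_j$ can be larger than $r_i$ by a factor of order $2^{i-j}$ (by \obsref{r:i}); since a split can shrink the bounding box by much more than a constant factor (a fact the paper itself acknowledges, noting that a child ``might be several classes away'' from its parent), there is no uniform bound absorbing $r_j$ into $21d\,r_i$. That said, the paper's own proof has exactly the same imprecision: it picks $(u,v) \in S_i$ but then invokes Eq.~\eqref{bound:on:m:i} and \lemref{similar:len}, which are stated only for $S_i'$. In practice this is harmless, because the only place \lemref{close:to:boundary} is used (\lemref{pairs:in:cluster}) involves nodes participating in pairs of $S_i'$; so your main argument, applied only to such $v$, suffices, and the parent-pair patch can simply be dropped.
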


\begin{proof}
    Indeed, let $(u,v)$ be a pair of $S_i$ that contains
    $v$.  Clearly, if the distance between $\R(v)$ and the
    boundary of $\CH(P)$ is larger than $21d r_i$ then
    \[
    \diam = \diam(\CH(P)) > M(u,v) -2r_i + 21d r_i,
    \]
    as we can pick two points of $\R(u), \R(v)$ that lies
    inside $\CH(P)$ of distance $\geq M(u,v) - 2r_i$, and by
    the above condition, we can extent this segment by
    length $21d r_i$ while remaining inside $\CH(P)$.

    This implies that
    \[
    \diam >
    M(u,v) + 21d r_i -2r_i \geq M_i + 21d r_i -6r_i
    \geq M_i + 18r_i  \geq M_i + 16r_i +4r_i,
    \]
    since $d \geq 2$, and by \Eqref{bound:on:m:i}.  Contradiction to
    \lemref{similar:len}.
\end{proof}

In the following, we will bound the number of pairs
generated by $\AprxDiamWSPD$ by a packing argument.

\begin{figure}
    \centerline{\XIpe{figs/angle}{figs/angle.ipe}}
    \caption{Illustration of the proof of \lemref{packing}}
    \figlab{angle}
\end{figure}

\begin{lemma}
    Let $\delta, 0 < \delta < 0.5$ be a given parameter, and
    $u,v,p,q$ four points taken from a set $Q, Q \subseteq
    \Re^d$. And we assume that the following holds:
    \begin{itemize}
        \item $p q, u v$ are ``almost'' diametrical pair:
        $1 - \delta \leq |p q|, |u v| \leq 1 + \delta$.

        \item None of the four points are too close to each
        other: $|p u|, |p v|, |q u|, |q v| \geq 3 \sqrt{\delta}$

        \item The angle $\alpha = \alpha(u v,p q)$ between the
        line supporting $u v$ and the line supporting $p q$ is
        $\leq \sqrt{\delta}/3$.
    \end{itemize}
    Then, $\diam(Q) \geq \max \brc{ |q u|, |p v|. |p u|, |q
       v|} > 1+ \delta$.

    \lemlab{packing}
\end{lemma}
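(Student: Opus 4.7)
The plan is to coordinatize so that $p$ sits at the origin and the $x$-axis is aligned with $\vec{pq}$, giving $q=(L,0,\dots,0)$ with $L=|pq|\in[1-\delta,1+\delta]$. Write $u=(a,\tilde{u})$ and $v=(b,\tilde{v})$, where $\tilde{u},\tilde{v}$ denote the components orthogonal to the $pq$-line. Relabeling $u\leftrightarrow v$ if necessary (which permutes the four distances constrained from below by $3\sqrt{\delta}$ in pairs and preserves every hypothesis), I may assume $a\le b$. The small-angle assumption then yields
\[
b-a \;=\; |uv|\cos\alpha \;\geq\; (1-\delta)\bigl(1-\alpha^{2}/2\bigr) \;\geq\; (1-\delta)\bigl(1-\delta/18\bigr),
\]
so $L(b-a)$ is only slightly less than $1$.

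The main tool is the parallelogram-type identity
\[
|pv|^{2}+|qu|^{2} \;=\; |pu|^{2}+|qv|^{2} + 2L(b-a),
\]
which follows by direct expansion of each squared distance: the perpendicular contributions $|\tilde{u}|^{2}$ and $|\tilde{v}|^{2}$ cancel symmetrically, and the $x$-coordinate algebra collapses to $2L(b-a)$. Intuitively, this quantifies how much the ``crossing'' diagonals $pv,qu$ exceed the ``parallel'' sides $pu,qv$ in squared-sum. Combining with $|pu|^{2}+|qv|^{2}\geq 18\delta$ from the spacing hypothesis, and with the estimate on $L(b-a)$ above, I obtain
\[
|pv|^{2}+|qu|^{2} \;\geq\; 18\delta + 2(1-\delta)^{2}(1-\delta/18).
\]
A routine expansion shows the right-hand side exceeds $2(1+\delta)^{2}$ for all $\delta\in(0,1/2)$, so at least one of $|pv|,|qu|$ is strictly greater than $1+\delta$, which gives the stated lower bound on $\diam(Q)$.

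The step most likely to require care is checking that the constant $3$ in $|pu|,|pv|,|qu|,|qv|\geq 3\sqrt{\delta}$ is chosen tightly enough: the $18\delta$ contributed by the spacing must dominate the slack $(1+\delta)^{2}-L(b-a)$, and a smaller constant would fail this balance. Higher dimensions cause no difficulty, since only the projections onto the $pq$-line enter the identity; and the WLOG reduction $a\leq b$ is clean because the hypothesis set is fully symmetric under $u\leftrightarrow v$.
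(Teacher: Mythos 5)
Your proof is correct, and it takes a genuinely different (and arguably cleaner) route than the paper's. The paper splits into cases according to whether $u$ and $v$ lie in the slab bounded by the hyperplanes through $p$ and $q$ perpendicular to $pq$: if one of them is outside, a direct right-triangle bound gives $|qu|\geq\sqrt{|pq|^2+|pu|^2}>1+\delta$; if both are inside, the paper projects $v$ onto the line through $u$ parallel to $pq$ and argues via $|uv'|=|uv|\cos\alpha$. Your argument avoids the case split entirely: coordinatizing with $p=0$, $q=(L,0,\dots,0)$, $u=(a,\tilde u)$, $v=(b,\tilde v)$, the identity
\[
|pv|^2+|qu|^2 \;=\; |pu|^2+|qv|^2+2L(b-a)
\]
isolates exactly the excess of the ``crossing'' distances over the ``parallel'' ones, and then the three hypotheses feed directly into the three terms: the spacing gives $|pu|^2+|qv|^2\geq18\delta$, the angle bound gives $b-a=|uv|\cos\alpha\geq(1-\delta)(1-\delta/18)$, and the length bound gives $L\geq1-\delta$. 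The final inequality $18\delta+2(1-\delta)^2(1-\delta/18)>2(1+\delta)^2$ reduces to $\frac{\delta}{9}\left(89+2\delta-\delta^2\right)>0$, which holds for all $\delta\in(0,1/2)$, so one of $|pv|,|qu|$ strictly exceeds $1+\delta$. What the paper's approach buys is a picture-driven argument that is easy to read off a figure; what yours buys is the elimination of the case analysis, a fully symmetric treatment of the four cross-distances, and transparency about how tight the constant $3$ in $3\sqrt{\delta}$ is (as you note, the $18\delta$ term must beat the slack between $2L(b-a)$ and $2(1+\delta)^2$). The WLOG reduction $a\leq b$ by swapping $u\leftrightarrow v$ is sound since the hypotheses and conclusion are symmetric in that swap.
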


\begin{proof}
    Let $H_p, H_q$ be the two hyperplanes passing through
    $p,q$, respectively, and perpendicular to the segment
    $p q$.  Let $S$ be the strip bounded by $H_p$ and $H_q$.

    If $u \notin S$, and assume that $u$ is closer to $p$
    than to $q$, then $|q u|$ is minimized when $u$ lies on
    $H_p$. Thus,
    \[
    |q u| = \sqrt{ |p q|^2 + |p u|^2}
    \geq \sqrt{|p q|^2 + 3^2 \delta} = \sqrt{(1-\delta)^2 + 9
       \delta} > \sqrt{(1+\delta)^2} = 1+ \delta,
    \]
    since $|p u| \geq 3\sqrt{\delta}$.

    The case that $v \notin S$ is handled in a similar
    fashion. Thus, we remain with the case $u,v \in S$. Let
    $\ell$ be the line passing through $u$ and parallel to
    $p q$ (we assume here that $u$ is closer to $p q$ than
    $v$). Let $v'$ be the projection of $v$ on $\ell$. It is
    easy to verify that $|p v'| \leq |p v|$, $|q v'| \leq
    |q v|$. See \figref{angle}.

    Furthermore,
    \[
    |u v'| = |u v| \cos \alpha
    \geq \pth{ 1 - \frac{\alpha^2}{2}} |u v|
    \geq \pth{ 1 - \frac{\delta}{18}} |u v|
    \geq \pth{ 1 - \frac{\delta}{18}} (1-\delta)
    \geq 1 - 2 \delta.
    \]
    Assume that $p$ is closer to $u$.
    It is easy verify that in such a case, $|p v'|$ is
    minimized when $u \in H_p$, and $|p u| = 3\sqrt{\delta}$.
    Indeed, $u,v'$ are by construction in $S$. Thus,
    $|p v'|$ is minimized, when  $|p u| = 3 \sqrt{\delta}$. In
    turn, this case is minimized when the angle $\alpha(p u,
    u v') = \pi/2$, which implies that $u \in H_p$. However,
    \[
    |p v'| = \sqrt{ |p u|^2 + |u v'|^2} \geq \sqrt{ 9 \delta +
     (1 -2\delta)} > 1 + \delta,
    \]
    and the result follows.
\end{proof}

We cluster the pairs of $S_i'$ as follows: We pick any pair
$(u_1, v_1) \in S_i'$ and let $U_1$ be the set of all pairs
$(u', v')$ in $S_i'$ such that, $|c(u)c(u')|, |c(v)c(v')|
\leq \rho_i$, where
\[
\rho_i = 100 \sqrt{\diam d r_i} + 54d r_i.
\]
We continue in similar fashion, clustering $S_i' \setminus
U_1$. Let $U_1, \ldots, U_{k_i}$ be the set of clusters of
$S_i'$, where $k_i$ denotes the number of resulting
clusters. Let $(u_j, v_j)$ denote the pair of $S_i'$ that
created $U_j$.

\begin{lemma}
    The number of pairs in a cluster $U_j$ is
    $O(2^{i(d-1)})$, for $j=1, \ldots, k_i$.
    \lemlab{pairs:in:cluster}
\end{lemma}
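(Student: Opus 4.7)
My plan is to bound separately the number of distinct left nodes $\{u' : (u',v') \in U_j\}$ and the number of distinct right nodes $\{v' : (u',v') \in U_j\}$ among the pairs of $U_j$, and then take the product.

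By the definition of $U_j$, every such $u'$ satisfies $|c(u')c(u_j)| \leq \rho_i$, so the bounding box $\R(u')$ is contained in a cube $C$ of side length $\ell_0 = O(\rho_i)$ centered at $c(u_j)$. I would partition $C$ into a grid of sub-cubes of side $\Theta(r_i)$, which yields $k = \Theta(\rho_i / r_i)$ sub-cubes along each axis. Every left node $u'$ of interest has $\R(u')$ meeting some sub-cube of this grid.

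The crucial step is to combine \lemref{close:to:boundary} with \lemref{grid:surface}: since every $\R(u')$ with $u' \in N_i$ lies within distance $L = 21 d\, r_i$ of $\partial \CH(P)$, only sub-cubes within distance $L$ of $\partial\CH(P)$ can be hit. Applying \lemref{grid:surface} with $\ell_0 = \Theta(\rho_i)$ and this value of $L$, the ratio $kL/\ell_0 = O_d(1)$, so the $(kL/\ell_0)^d$ term collapses to a constant and the number of relevant sub-cubes is $O_d(k^{d-1})$. Finally, \lemref{N:i:packing} applied to each sub-cube of side $\Theta(r_i)$ gives only $O_d(1)$ nodes of $N_i$ per sub-cube (because $r_i \cdot 2^i/\diam = O(\sqrt{d})$). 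Multiplying, the number of distinct left nodes is $O_d(k^{d-1})$.

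A short calculation, using $\rho_i = 100\sqrt{\diam \cdot d \cdot r_i} + 54 d\, r_i$ together with $r_i = \Theta(\diam/2^i)$, gives $\rho_i / r_i = O_d(2^{i/2})$, so the left-node count is $O_d(2^{i(d-1)/2})$; by symmetry the right-node count is the same, and taking the product yields the claimed bound $O(2^{i(d-1)})$ on $|U_j|$. The main obstacle is verifying that the three scale choices (cube side $\ell_0 = O(\rho_i)$, grid refinement $\Theta(r_i)$, boundary thickness $L = O(d\, r_i)$) balance so that $(kL/\ell_0)^d = O_d(1)$; without this collapse, one would obtain only the trivial packing bound $O_d(k^d) = O_d(2^{id/2})$ per side, for a final bound of $O(2^{id})$ rather than the desired $O(2^{i(d-1)})$.
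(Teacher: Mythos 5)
Your proposal is correct and follows the paper's own proof almost verbatim: same decomposition (a cube of side $O(\rho_i)$ around the cluster center, refined into a grid of cells of side $\Theta(r_i)$), same three ingredients (\lemref{close:to:boundary}, \lemref{grid:surface}, \lemref{N:i:packing}) applied in the same order, and the same strategy of bounding each side of the cluster separately and taking the product. The only difference is cosmetic: you make the scale-balancing check, that $kL/\ell_0 = O_d(1)$ so the $(kL/\ell_0)^d$ term collapses to a constant, a little more explicit than the paper does.
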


\begin{proof}
    A cluster $U_j$ have pairs with bounding boxes with
    distance $\leq \rho_i$ from the centers $p_j,q_j$ of
    $\R(u_j)$, $\R(v_j)$, respectively, where $(u_j, v_j)$
    is the pair of nodes defining the cluster $U_j$.

    Let $W \subseteq N_i$ be the set of all the nodes of $T$
    that participate in $U_j$ and that are in distance at
    most $\rho_i$ from $p_j$.

    \begin{figure}
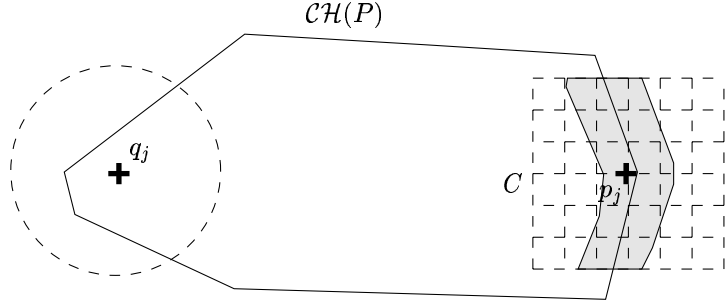

        \centerline{\XIpe{figs/cluster}{figs/cluster.ipe}}
        \caption{Illustration of the proof of
           \lemref{pairs:in:cluster}. The cells that
        participate in a cluster $U_j$ are centered around
        $p_j$ and $q_j$. By \lemref{close:to:boundary} all
        such nodes must lie close to the surface of the
        convex-hull $\CH(P)$.}
        \figlab{cluster}
    \end{figure}
    Let $C$ be a cube side length $2\rho_i$, and let $G$ be
    the grid created inside $C$ by breaking it into cubes of
    side length $r_i$. All the nodes of $W$ must intersect
    $C$ (and thus one of the cells of $G$), and are in
    distance at most $21d r_i$ from the boundary of $\CH(P)$,
    by \lemref{close:to:boundary}. See \figref{cluster}. By
    \lemref{grid:surface} the number of grid cells of $G$
    that are in such distance from $\bd{\CH(P)}$ is
    \[
    O\pth{ (\rho_i/r_i)^{d-1}( 1 +  (21d r_i/r_i)^d)
       } =
    O \pth{ 2^{i(d-1)/2}}.
    \]
    But on the other hand, each such cell grid of $G$,
    intersects at most $K(r_i,d,i) = O((r_i2^i/\diam)^d) =
    O(1)$ cells of $N_i$ (and thus of $W$) by
    \lemref{N:i:packing}. In particular, the number of nodes
    in $W$ is $O(2^{i(d-1)/2} K(r_i,d,i)) = O(2^{i(d-1)/2})$.

    Using a similar argument, we can bound the number of
    nodes of $N_i$ that appear in the other side of $U_j$.
    In particular, the overall number of pairs in $U_j$, is
    the product of those two quantities, which is
    $O((2^{i(d-1)/2})^2) = O(2^{i(d-1)})$.
\end{proof}

\begin{lemma}
    For any two clusters $U_j, U_m$, the angle between
    $c(u_j)c(v_j)$ and $c(u_m)c(v_m)$ is larger than
    $\beta$, for $j \neq m$, where $(u_j,v_j), (u_m, v_m)$
    are two pairs defining the clusters $U_j, U_m$, and
    \[
    \beta = \sqrt{\frac{3d r_i}{\diam}}.
    \]
    \lemlab{dirs}
\end{lemma}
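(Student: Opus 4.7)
The plan is to argue by contradiction via \lemref{packing}. Suppose the angle $\alpha$ between the lines supporting $c(u_j)c(v_j)$ and $c(u_m)c(v_m)$ is at most $\beta = \sqrt{3 d r_i / \diam}$. I will exhibit four points $p, q, u, v \in P$, one from each of the bounding boxes $\R(u_j), \R(v_j), \R(u_m), \R(v_m)$, that satisfy all three hypotheses of \lemref{packing} after rescaling distances by $\diam$, with $\delta$ chosen marginally larger than $9\beta^2 = 27 d r_i / \diam$. The lemma will then produce a pair in $\{p,q,u,v\} \subseteq P$ at distance $>(1+\delta)\diam$, contradicting the definition of $\diam$.

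For the setup, pick arbitrary representatives $p \in \PSET(u_j)$, $q \in \PSET(v_j)$, $u \in \PSET(u_m)$, $v \in \PSET(v_m)$, relabeling $(u_m,v_m)$ if necessary so that the matching $p \leftrightarrow u$, $q \leftrightarrow v$ is the \emph{aligned} one, i.e., the pairing that minimizes $|c(u_j)c(u_m)| + |c(v_j)c(v_m)|$ among the two possible pairings. The length condition is immediate from \lemref{similar:len}: both $M(u_j,v_j)$ and $M(u_m,v_m)$ lie in $[\diam - (16d+4)r_i, \diam + 4r_i]$, and since $|c(\cdot)c(\cdot)|$ differs from $M(\cdot,\cdot)$ by at most $2r_i$ and each representative lies within $r_i$ of its center, both $|pq|$ and $|uv|$ lie comfortably in $\diam \cdot [1-\delta, 1+\delta]$. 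The angle condition is equally direct: replacing centers by representatives rotates each direction by at most $\arctan(r_i/|pq|) = O(r_i/\diam)$, which is absorbed by choosing $\delta$ slightly larger than $9\beta^2$.

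The pairwise-separation condition is the crux. Since $(u_m,v_m) \notin U_j$, at least one of $|c(u_j)c(u_m)|$ and $|c(v_j)c(v_m)|$ exceeds $\rho_i = 100\sqrt{\diam d r_i} + 54 d r_i$. The two direction vectors $c(v_j)-c(u_j)$ and $c(v_m)-c(u_m)$ have lengths within $O(dr_i)$ of each other (\lemref{similar:len}) and angular deviation at most $\beta$; a short vector computation therefore gives $\bigl| |c(u_j)c(u_m)| - |c(v_j)c(v_m)| \bigr| \leq \diam\cdot\beta + O(dr_i) = O(\sqrt{d r_i \diam})$, so \emph{both} matching distances must exceed $\rho_i - O(\sqrt{d r_i \diam}) \gg 3\sqrt{\delta}\diam$. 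The cross-distances $|c(u_j)c(v_m)|$ and $|c(v_j)c(u_m)|$ are at least $\approx\diam/2$: by alignment they dominate the matching distances along the common direction, and the constraint that all four representative distances are bounded by $\diam$ pins them close to the full segment length $L_m \approx \diam$. Passing from centers to representatives shifts each distance by at most $2r_i$, which is negligible.

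The main obstacle is exactly this separation step. The cluster rule only guarantees that \emph{one} of the two matching distances exceeds $\rho_i$, whereas \lemref{packing} demands that all four cross-distances exceed $3\sqrt{\delta}\diam$; the small-angle hypothesis is what lets one transfer the lower bound from one matching coordinate to its partner. This also explains the calibration of $\rho_i$: the leading term $100\sqrt{\diam d r_i}$ is designed to absorb the $\diam\cdot\beta = \sqrt{3 d r_i \diam}$ slippage induced by the angular tilt while still leaving generous slack above $3\sqrt{\delta}\diam$, and the additive $54 d r_i$ cushions the lower-order $\diam - L_m$ discrepancies supplied by \lemref{similar:len}.
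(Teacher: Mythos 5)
Your proposal is correct in spirit and goes through, but it is organized quite differently from the paper. The paper's proof is a two‑way case split on the four center distances $|c(u_j)c(u_m)|$, $|c(u_j)c(v_m)|$, $|c(v_j)c(u_m)|$, $|c(v_j)c(v_m)|$: if all four exceed $\rho_i/2$ it applies \lemref{packing} (to the centers, then passes to nearby points), and if one of them is below $\rho_i/2$ it invokes the cluster rule to get that the partner distance exceeds $\rho_i$ and then argues \emph{directly} that a small/large mismatch between matching distances forces the angle between $c(u_j)c(v_j)$ and $c(u_m)c(v_m)$ above $\beta$, a contradiction that never touches \lemref{packing}. You instead show up‑front that, under the aligned labeling, \emph{all four} distances are forced to be large, and then route everything through \lemref{packing}. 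This eliminates the case split, which is the main thing your argument buys. What is worth noticing is that the two approaches lean on the \emph{same} vector identity $\vec{bd}-\vec{ac}=\vec{cd}-\vec{ab}$: you use it forward (small angle $\Rightarrow$ the two matching distances differ by $O(\sqrt{d\,r_i\,\diam})$, so if one is big both are), while the paper uses it backward in its second case (one matching distance small, the other large $\Rightarrow$ angle is large). So the core geometry is the same; the student's version is a repackaging.

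A few places where your write‑up is loose and should be tightened. First, ``aligned'' should be defined by picking the labeling of $(u_m,v_m)$ for which the vectors $c(u_j)c(v_j)$ and $c(u_m)c(v_m)$ make an angle at most $\beta$ (rather than at least $\pi-\beta$); this is what the vector computation actually needs, and it is only a posteriori equivalent to minimizing $|c(u_j)c(u_m)|+|c(v_j)c(v_m)|$. Second, the cross‑distance bound ``$\approx\diam/2$'' is an understatement and the ``pins them close to $L_m$'' phrasing hides the real argument: from $\vec{ad}+\vec{cb}=\vec{ab}+\vec{cd}$ and alignment one gets $|ad|+|bc| \geq |\vec{ab}+\vec{cd}| \geq 2\diam - O(d\,r_i)$, and since each cross distance is at most $\diam+2r_i$ (because the corresponding points of $P$ are at distance at most $\diam$), both are at least $\diam-O(d\,r_i)$. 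That stronger and cleaner bound is what you actually need, and what your own inequality chain gives. Third, ``$\delta$ marginally larger than $9\beta^2$'' undersells the slack needed: passing from centers to representatives perturbs directions by $O(r_i/\diam)$, which is not negligible compared with $\beta$ for small $i$; a constant factor on $\delta$ (still of order $d\,r_i/\diam$) is required, not a marginal bump. None of these affect the structure of the argument, but they do need to be fixed before the constants in $\rho_i$ can be seen to work.
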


\begin{proof}
    Assume that $\alpha$, the angle between $c(u_j)c(v_j)$
    and $c(u_m)c(v_m)$, is smaller than $\beta$. We remind
    the reader, that the clustering radius is $\rho_i = 100
    \sqrt{\diam d r_i} + 54d r_i$.  And let $\delta = 27d
    r_i/\diam$. If all the lengths of all the pairs in $F =
    \brc{ c(u_j), c(v_j) } \times \brc{c(u_m), c(v_m)}$ are
    all larger than $\rho_i/2 \geq 3\sqrt{\delta}\diam$,
    then by \lemref{packing}, we know that the length of one
    of the pairs $(u,v)$ of $F$ is larger than $\diam(1+
    \delta)$.  Which is impossible, because then there are
    two points $p \in \PSET(u), q \in \PSET(v)$ such that
    $|p q| \geq \diam(1+\delta) -4r_i > \diam + 20d r_i >
    \diam$. A contradiction.

    Thus, there must be a pair in $F$ which is shorter than
    $\rho_i/2$, and assume without loss of generality that
    $|c(u_j)c(u_m)| < \rho_i/2$. However, since $(u_m, v_m)$
    is not in the cluster of $(u_j,v_j)$, it follows that
    $|c(v_j)c(v_m)| > \rho_i$.  Using simple geometric
    arguments, it follows that
    \[
    \frac{\alpha}{2} \geq \sin\frac{\alpha}{2} \geq
    \frac{\rho_i -2\delta\diam}{2} \geq 50
    \sqrt{\frac{d r_i}{\diam}} \geq \beta.
    \]
    A contradiction.
\end{proof}

\begin{lemma}
    The number of pairs created by $\AprxDiamWSPD$ is
    $O(1/\eps^{3(d-1)/2})$.

    \lemlab{overall:pairs}
\end{lemma}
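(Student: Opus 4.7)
The plan is to bound $|S_i'|$ for each scale $i$ by combining the in-cluster packing bound with the between-cluster angular separation bound, then to sum a geometric series over the $O(\log(1/\eps))$ nonempty scales.

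First, I would bound the number of clusters $k_i$. By \lemref{dirs}, the directions of the defining segments of distinct clusters $U_j, U_m$ are pairwise angularly separated by at least $\beta = \sqrt{3 d r_i / \diam}$. By \obsref{r:i} we have $r_i \leq \sqrt{d}\, \diam / 2^i$, so $\beta = \Omega(1/2^{i/2})$ (with constants depending only on $d$). A standard sphere-packing argument then gives $k_i = O(1/\beta^{d-1}) = O(2^{i(d-1)/2})$.

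Next, combine with \lemref{pairs:in:cluster}, which bounds the number of pairs per cluster by $O(2^{i(d-1)})$. Multiplying,
\[
    |S_i'| \;=\; \sum_{j=1}^{k_i} |U_j| \;=\; O\pth{2^{i(d-1)/2} \cdot 2^{i(d-1)}} \;=\; O\pth{2^{3i(d-1)/2}}.
\]
Since every pair in $S_i$ is either in $S_i'$ or is a child of a pair in some $S_{i'}'$ with $i' \leq i$, and $\AprxDiamWSPD$ produces only $O(1)$ children per expanded pair, we have $|S_i| = O(|S_i'|)$ (up to shifting $O(1)$ levels, which by \obsref{big:parents} and the discussion preceding \corref{run:aprx:diam} stays within a bounded constant span of scales).

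Finally, I would sum over the scales. By \lemref{S:i:empty}, $S_i = \emptyset$ for $i > K = \ceil{\log(4d/\eps)}$, so the total pair count is
\[
    \sum_{i=1}^{K} O\pth{2^{3i(d-1)/2}} \;=\; O\pth{2^{3K(d-1)/2}} \;=\; O\pth{(1/\eps)^{3(d-1)/2}},
\]
where the last step uses that the geometric sum is dominated by its largest term and $2^{K} = O(1/\eps)$.

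The main obstacle is really bookkeeping rather than new ideas: one must be careful that the charging from $S_i$ to $S_i'$ (and through the parent relation) does not leak across too many scales, and that the angular-separation/packing arguments of \lemref{dirs} and \lemref{pairs:in:cluster} combine multiplicatively — the exponent $3(d-1)/2$ is precisely $(d-1)$ from within-cluster packing plus $(d-1)/2$ from the direction covering on the unit sphere.
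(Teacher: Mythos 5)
Your proposal follows essentially the same route as the paper: bound $k_i = O(2^{i(d-1)/2})$ via \lemref{dirs} and a sphere-packing argument, multiply by the per-cluster bound $O(2^{i(d-1)})$ from \lemref{pairs:in:cluster}, and sum the resulting geometric series over the $O(\log(1/\eps))$ nonempty scales guaranteed by \lemref{S:i:empty}. The only slight difference is that you spell out the $S_i \to S_i'$ charging a bit more explicitly than the paper (which handles it in the preamble before the lemma rather than in the proof), but the argument and the exponent decomposition $(d-1) + (d-1)/2$ are identical.
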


\begin{proof}
    By \lemref{dirs}, the angle between the axes of any
    two clusters $U_j, U_m$ is larger than
    $\beta = \sqrt{\frac{3d r_i}{\diam}}$. This implies, by
    using a packing argument on the sphere of directions in
    $\Re^d$, that the number of clusters $k_i$ is at most
    \[
    O \pth{ \frac{1}{\beta^{d-1}}} = O \pth{ \pth{
    \frac{\diam}{r_i}}^{(d-1)/2} } = O \pth{ 2^{i(d-1)/2} },
    \]
    by \obsref{r:i}.

    By \lemref{pairs:in:cluster}, the number of pairs of
    $S_i'$ in each cluster is $O(2^{i(d-1)})$. We conclude,
    that the number of pairs in $S_i$ is
    $O(2^{3i(d-1)/2})$.
    Summing over $i=1, \ldots, \ceil{ \log \frac{ 4
          d}{\eps}}$ (using \lemref{S:i:empty}), we
    conclude, that the overall number of pairs created by
    the algorithm $\AprxDiamWSPD$ is $O(1/\eps^{3(d-1)/2})$.
\end{proof}

\begin{theorem}
    \thmlab{bound}%
    The algorithm $\AprxDiamWSPD$ $\eps$-approximates the
    diameter in
    \[
    O\pth{\pth{n+\frac{1}{\eps^{3(d-1)/2}}}\log{\frac{1}{\eps}}}
    \]
    time.
\end{theorem}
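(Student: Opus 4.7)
The plan is to assemble the theorem from the pieces already established. Correctness of the output, namely that the returned value $D$ satisfies $(1-\eps)\diam \leq D \leq \diam$, was already verified in \lemref{diam:alg} for both $\AprxDiameter$ and $\AprxDiamWSPD$ (the argument only used the rejection condition \Eqref{cond}, which is identical for the two variants). So what remains is strictly the running-time bound.

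Next, I would bound the number of pairs ever inserted into $\Pcurr$ by $\AprxDiamWSPD$. This is exactly \lemref{overall:pairs}, which gives $O(1/\eps^{3(d-1)/2})$. Since each pair causes at most a constant number of child pairs to be generated (two, because $\AprxDiamWSPD$ only splits the side with larger $\lmax$), the total number of heap insertions and deletions is of the same order. The fair-split tree nodes actually constructed are likewise bounded: the depth analysis preceding \corref{run:aprx:diam} bounds the deepest node constructed by $O(\log(1/\eps))$, so the tree built on demand has $O(\log(1/\eps))$ levels, and splitting all nodes on a single level costs $O(n)$ total time using the naive linear-time split primitive from \secref{prelim}. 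This yields $O(n \log(1/\eps))$ for all splits together.

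Finally, I would combine the two bottlenecks. Each heap operation on a heap of size $O(1/\eps^{3(d-1)/2})$ costs $O(\log(1/\eps))$ time, so the pair-handling cost is
\[
O\pth{\frac{1}{\eps^{3(d-1)/2}} \log\frac{1}{\eps}}.
\]
Adding the $O(n \log(1/\eps))$ tree-construction cost gives the claimed
\[
O\pth{\pth{n + \frac{1}{\eps^{3(d-1)/2}}}\log\frac{1}{\eps}}.
\]

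There is no real obstacle here, since everything technical was already absorbed into \lemref{overall:pairs}. The only minor thing to be careful about is to note that the tree is built on demand and that its effective depth, rather than the worst-case $O(n)$, is what drives the linear-in-$n$ term; this follows from the depth argument used in \corref{run:aprx:diam}, which applies verbatim to $\AprxDiamWSPD$ because that algorithm also only handles pairs in the classes $S_i$ with $i \leq K = \ceil{\log(4d/\eps)}$ by \lemref{S:i:empty}.
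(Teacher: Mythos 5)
Your proposal is correct and follows the same reasoning that the paper relies on implicitly (the paper states \thmref{bound} immediately after \lemref{overall:pairs} without a separate proof). You correctly assemble the three ingredients exactly as the paper intends: the approximation guarantee from \lemref{diam:alg}, the $O(1/\eps^{3(d-1)/2})$ bound on pairs from \lemref{overall:pairs} combined with an $O(\log(1/\eps))$ cost per heap operation, and the tree-construction cost of $O(n)$ per level over the $O(\log(1/\eps))$ levels implied by \lemref{S:i:empty} and the depth argument preceding \corref{run:aprx:diam}.
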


\section{Experimental Results}
\seclab{results}

We had implemented a number of algorithms for computing the
diameter. The algorithms were implemented using \verb|C++|
on a Pentium III 800MhZ running Linux with 512MB memory. Below
we detail the implemented algorithms, and the results
themselves are shown in Tables \ref{tbl:results:mod},
\ref{tbl:results:syn}. Note, that the running time of the
naive exact algorithm is only estimated (this was done by
running the algorithm for $10,000$ points and scaling the
running time according to the number of points at hand).

\subsection{Algorithms Implemented}

The source code of the program is available in
\cite{h-scpca-00}. In particular, an independent module that
implements the algorithm is provided to be used by other
programs.

\subsubsection{Approximation Algorithms that can Compute the
   Exact Solution}

In the following, FS stands for Fair-Split tree and WSPD
stand for Well Separated Pairs Decomposition \cite{ck-dmpsa-95}.
\begin{itemize}
    \item {\bf FS Heap} - implementation of
    \AprxDiameter{}.

    \item {\bf FS Heap Ptr} - implementation of
    \AprxDiameter{} so that the algorithm manipulates
    pointers to the points during the tree construction,
    instead of copying the points themselves. This
    modification results in an improvement in execution
    time, mainly because of a slightly more careful
    implementation of this module.

    \item {\bf FS WSPD} - implementation of \AprxDiamWSPD{}.

    \item {\bf FS Levels} - A variant of {FS WSPD} that
    avoids the usage of a heap by using a FIFO (first in,
    first out) queue (i.e., in the $i$-th iteration all the
    pairs of depth $i$ in the tree are split).

    \item {\bf FS Levels Ptr} - implementation of {FS
       Levels} so that the algorithm manipulates pointers
    to the points during the tree construction, instead of
    copying the points themselves. This modification results
    in a minor improvement in execution time.
\end{itemize}

\subsubsection{\TPDF{$\eps$}{eps}-Approximation Algorithms}

\begin{itemize}
    \item {\bf Chan} - implementation of the algorithm of
    \cite{c-adwse-02}.

    \item {\bf Chan Mod} - implementation of the algorithm
    of \cite{c-adwse-02}, where the 2d recursive call is
    computed using a 2d variant of \AprxDiameter{}. This
    variant seems to be slower than the original algorithm.

    \item {\bf Grid} -  First do grid cleaning, and then use
    {FS Levels} on the resulting point set.

    \item {\bf FS Directions} - A variant of {FS Heap}.
    When a pair $(u,v)$ of nodes is created, so that
    $\nu(u,v) \leq \sqrt{\eps}$, we project the points in
    the two pairs to the axis of the pair, and find the two
    extreme points along the projection, where $\nu(u,v)$ is
    a conservative estimate to the maximum angle between
    $c(u)c(v)$ and any segment $p q$, where $p \in \PSET(u),
    q \in \PSET(v)$.  Return the corresponding pair of
    points, as the candidate of this pair to be the
    diameter.

    \item {\bf Grid FS Dir} - Uses grid cleaning and then
    apply {FS Directions} on the resulting set.
\end{itemize}

\subsubsection{Constant factor approximation to the diameter}

\begin{itemize}
    \item {\bf BBox} - Computes the axis parallel bounding
    box of the input point-set and the two furthest points
    that lie on opposite faces of the bounding box.  This
    algorithm reads the points only once and as such is a
    good reference in comparing the running times of the
    other algorithms.

    \item {\bf PCA} - computes the center of mass, and
    principal component analysis to get three orthogonal
    vectors that represents the input point-set. Scan the
    point-set to find the three extreme pairs in those
    directions, and chooses the longest pair as the
    candidate to be the diameter. Always return a $\sqrt{3}$
    approximation to the diameter.
\end{itemize}

\subsection{Inputs}

We tried two types of inputs: (i) real graphics models -
taken from various publicly-available sources (3D Cafe, and
\cite{lgma-00}). For {\em all} such models the new algorithm
performed extremely well, and the results are shown in Table
\ref{tbl:results:mod}.

We also tried the algorithm on two synthetic inputs for
which it performs badly. The first (denoted as sphere in
Table \ref{tbl:results:syn}) is generated by randomly
sampling points on a sphere. The distribution used was the
uniform distribution (one can prove that the expected
running time of \AprxDiamWSPD{} to compute the exact
diameter for such input in 3d is about $O(n^{3/2})$.).

The second type of input used, were points picked from two
tiny arcs that are far away from each other and are
orthogonal to each other (the supporting tangents of the two
arcs lie are orthogonal). Rotated properly, this input
requires $\Theta(n^2)$ from \AprxDiamWSPD{} to compute the
exact diameter. Even for this case, \AprxDiamWSPD{} is about
twice faster than the naive algorithm (because only $n^2/4$
pairs of points are being tested compared to $n^2/2$ by the
naive algorithm).

\subsection{Analysis of Results}

The new algorithm \AprxDiameter{} (FS Heap) and its variants
perform well in practice. In particular, it computed the
{\em exact} diameter for the models we tested it with, in
time which is at most ten times the time to compute the
axis-parallel bounding box.  Similar performance gains were
seen for $\eps=0.01$.  Chan's algorithm is faster only for
large values of $\eps$, namely $\eps =0.1$.  Even for those
values, the fastest algorithm seems to be {\bf Grid} -
probably because grid cleaning is so effective for such
values of $\eps$. Overall, it seems that the new algorithm
performs especially well for ``real'' inputs.

For the synthetic inputs, the pictures is less encouraging.
Computing the exact diameter by any of the new algorithms
requires a lot of work. For the sphere inputs, once the inputs
are large enough, Chan's algorithms shines being much faster
than any of the alternatives. To some extent, this is the
most favorable input for Chan algorithm: there are no clear
candidates to be the diameter and all feasible directions
are equally feasible.

For the arcs type of inputs, the picture is discouraging
only if one considers the exact variants. Computing the
approximate solution is quite easy, and \AprxDiameter{}
performs extremely well.

\section{Conclusions}
\seclab{conclusions}

We presented a new algorithm for computing the diameter of a
point-set in $2$ and higher dimensions, concentrating on the
3d case for our experiments. The new algorithms can be
easily be implemented in a few hours of work and it
performed extremely well for real inputs, although its
performance on specific synthetic inputs is quadratic.  The
new algorithm can also be used as an approximation
algorithm and then concrete (near linear) guarantees on its
performance are given.  The new algorithm has two favorable
properties: (i) if executed for long-enough time it arrives
sooner or later to the exact solution (this, of course,
holds when executing the algorithm with $\eps=0$), and
furthermore one can give guarantees on the quality of
approximation as a function of the time the algorithm was
executed, and (ii) the algorithm is sensitive to the
hardness of the input and is fast on ``easy'' inputs.

In conclusion, the problem of the diameter seems to be easy
in practice. Even the most naive algorithm (i.e., axis
parallel bounding-box) computes the {\em exact} diameter for
most inputs. Although there were ``real'' inputs for which
only \AprxDiameter{} (and its variants) was able to provide
the exact diameter, in most cases the error was quite small.
Inputs for which the simple algorithms perform really badly
compared with \AprxDiameter{} are rare (i.e., error larger
than $10\%$). However, it is quite easy to generate inputs
for which both the BBox and PCA algorithms give only a
$\sqrt{3}$ approximation.  Thus, the new algorithm should be
used in applications where guarantees on the quality of
approximation are necessary, and speed is a major
consideration.  Otherwise, if the application can manage
with a rough approximation of the diameter it seems that the
axis-parallel bounding box approach might be sufficient.

As mentioned before, the source code of the program used is
the experiments is available from \cite{h-scpca-00}.

The author conjectures that the currently (theoretically)
fastest approximation algorithm for $d>3$, due to Chan
\cite{c-adwse-02}, which runs in $O(n +1/\eps^{d-1})$ time,
can be significantly improved. We leave that as an open
question for further research.

\subsection*{Acknowledgments}

The author wishes to thank Pankaj Agarwal, Boris Aronov,
Jeff Erickson, Piotr Indyk, Lutz Kettner and Edgar Ramos for
helpful discussions concerning the problem studied in this
paper and related problems. Finally, the author thank the
anonymous referees for a number of useful comments.

\BibTexMode{%
   \bibliographystyle{alpha}
   \bibliography{diameter}
}%
\BibLatexMode{\printbibliography}

\begin{table*}[p]
\centerline{
\begin{tabular}{|l||r|r|r|r|r|r|r|r|}
    \cline{2-9}
\multicolumn{1}{c||}{}
& \multicolumn{8}{c|}{Running time in seconds}
\\
\hline
  input
& Clock
& Bunny
& Dragon II
& DS9
& Hand
& Dragon
& {\footnotesize Buddha}
& Blade
  \\
  points
&{\footnotesize 30,744}
&{\footnotesize 35,947}
&{\footnotesize 54,831}
&{\footnotesize 298,517}
&{\footnotesize 327,323}
&{\footnotesize 437,645}
&{\footnotesize 543,652}
&{\footnotesize 882,954}
  \\
  \hline%
  \hline%
  \multicolumn{1}{|l||}{$\eps=0\Bigl.$}\\
  \hline%
FS Heap           & 0.07       & 0.08       & 0.08       & 0.76       & 0.49       & 0.92       & 0.89       & 1.29      \\
FS Heap Ptr & 0.05       & 0.06       & 0.05       & 0.57       & 0.35       & 0.69       & 0.66       & 0.91      \\
FS WSPD           & 0.06       & 0.08       & 0.09       & 0.76       & 0.51       & 0.85       & 0.92       & 1.29      \\
FS Ptr Levels     & 0.06       & 0.08       & 0.09       & 0.63       & 0.41       & 0.83       & 0.79       & 1.18      \\
FS Levels         & 0.07       & 0.09       & 0.11       & 0.80       & 0.52       & 0.94       & 0.99       & 1.52      \\
  \hline%
  \hline%
  \multicolumn{1}{|l||}{$\eps=0.01\Bigl.$}\\
  \hline%
FS Heap & 0.07       & 0.06       & 0.07       & 0.74       & 0.50       & 0.82       & 0.88       & 1.28      \\
FS Heap Ptr & 0.05       & 0.05       & 0.05       & 0.54       & 0.35       & 0.64       & 0.63       & 0.91      \\
FS WSPD        & 0.06       & 0.06       & 0.07       & 0.74       & 0.48       & 0.79       & 0.87       & 1.27      \\
FS Ptr Levels  & 0.06       & 0.07       & 0.08       & 0.61       & 0.41       & 0.80       & 0.77       & 1.16      \\
FS Levels       & 0.07       & 0.09       & 0.11       & 0.74       & 0.52       & 0.89       & 0.95       & 1.49      \\
Chan           & 0.95       & 1.42       & 1.99       & 3.67       & 6.75       & 9.40       & 11.30      & 12.18     \\
Chan Mod       & 0.87       & 1.25       & 1.83       & 3.55       & 6.60       & 9.27       & 11.23      & 12.05     \\
Grid           & 0.10       & 0.15       & 0.18       & 0.48       & 0.58       & 0.97       & 1.02       & 1.22      \\
FS Directions  & 0.32       & 0.18       & 0.11       & 13.61      & 0.56       & 1.61       & 1.46       & 1.43      \\
  \hline%
  \hline%
  \multicolumn{1}{|l||}{$\eps=0.1\Bigl.$}\\
  \hline%
FS Heap            & 0.05       & 0.05       & 0.07       & 0.61       & 0.46       & 0.69       & 0.74       & 1.12      \\
FS Heap Ptr & 0.02       & 0.04       & 0.05       & 0.43       & 0.31       & 0.53       & 0.55       & 0.77      \\
FS WSPD         & 0.04       & 0.05       & 0.07       & 0.62       & 0.41       & 0.68       & 0.74       & 1.11      \\
FS Ptr Levels   & 0.04       & 0.06       & 0.07       & 0.50       & 0.36       & 0.61       & 0.62       & 0.90      \\
FS Levels        & 0.05       & 0.07       & 0.10       & 0.62       & 0.47       & 0.71       & 0.78       & 1.17      \\
Chan            & 0.06       & 0.09       & 0.08       & 0.26       & 0.28       & 0.40       & 0.50       & 0.75      \\
Chan Mod        & 0.08       & 0.09       & 0.09       & 0.27       & 0.30       & 0.41       & 0.51       & 0.74      \\
Grid            & 0.03       & 0.04       & 0.05       & 0.25       & 0.26       & 0.36       & 0.43       & 0.68      \\
Grid FS Dir     & 0.04       & 0.03       & 0.05       & 0.25       & 0.25       & 0.35       & 0.42       & 0.69      \\
FS Directions   & 0.09       & 0.06       & 0.08       & 1.58       & 0.45       & 0.75       & 0.82       & 1.11      \\
  \hline%
  \multicolumn{2}{|c|}{$\eps=\frac{\sqrt{3}-1}{\sqrt{3}} \approx 0.73\Bigr.$}\\
  \hline%
BBox           & 0.00       & 0.01       & 0.00       & 0.03       & 0.03       & 0.05       & 0.07       & 0.10      \\
PCA            & 0.01       & 0.01       & 0.02       & 0.08       & 0.08       & 0.12       & 0.14       & 0.22      \\

\hline
\end{tabular}}
\caption{Results for various models}
\label{tbl:results:mod}
\end{table*}

\begin{table*}[p]
    \centerline{
\begin{tabular}{|l||r|r|r|r||r|r|r|r|}
    \cline{2-9}
\multicolumn{1}{c||}{}
& \multicolumn{8}{c|}{Running time in seconds}
\\
\cline{2-9}
  \multicolumn{1}{c||}{}
& \multicolumn{4}{c||}{Sphere}
& \multicolumn{4}{c|}{ARCS}
\\
\hline
  input
&  $10^3$
&  $10^4$
&  $10^5$
&  $10^6$
&  $10^3$
&  $10^4$
&  $10^5$
&  $2*10^5$\\
points                & 1,000      & 10,000     & 100,000    & 200,000    & 1,000      & 10,000     & 100,000    & 200,000   \\
  \hline%
  \multicolumn{1}{|l||}{$\eps=0$}\\
  \hline%
  FS Heap           & 0.04       & 1.14       & 50.15      & 202.26     & 0.03       & 0.01       & 0.04       & 0.08      \\
FS Heap Ptr & 0.03       & 2.47       & 144.63     & 450.29     & 0.04       & 0.00       & 0.03       & 0.07      \\
FS WSPD           & 0.04       & 1.19       & 53.74      & 157.08     & 0.03       & 0.00       & 0.05       & 0.09      \\
FS Ptr Levels     & 0.03       & 1.00       & 49.55      & 179.26     & 0.03       & 0.00       & 0.04       & 0.08      \\
FS Levels         & 0.03       & 1.05       & 52.02      & 186.01     & 0.03       & 0.01       & 0.06       & 0.11      \\
  \hline%
  \multicolumn{1}{|l||}{$\eps=0.01$}\\
  \hline%
FS Heap & 0.07       & 0.96       & 34.32      & 140.69     & 0.06       & 0.00       & 0.03       & 0.09      \\
FS Heap Ptr & 0.06       & 2.11       & 126.57     & 371.62     & 0.07       & 0.00       & 0.03       & 0.08      \\
FS WSPD        & 0.03       & 1.00       & 37.12      & 96.99      & 0.00       & 0.01       & 0.05       & 0.09      \\
FS Ptr Levels  & 0.03       & 0.82       & 33.85      & 122.99     & 0.00       & 0.00       & 0.05       & 0.09      \\
FS Levels       & 0.04       & 0.88       & 35.62      & 130.08     & 0.00       & 0.01       & 0.06       & 0.10      \\
Chan           & 0.04       & 0.50       & 3.48       & 6.45       & 0.00       & 0.06       & 0.13       & 0.20      \\
Chan Mod       & 0.03       & 0.81       & 3.82       & 6.96       & 0.00       & 0.07       & 0.13       & 0.20      \\
Grid           & 0.21       & 0.99       & 39.68      & 125.04     & 0.06       & 0.03       & 0.10       & 0.16      \\
FS Directions  & 0.33       & 1.41       & 440.65     & 961.95     & 0.06       & 0.01       & 0.05       & 0.10      \\
  \hline%
  \multicolumn{1}{|l||}{$\eps=0.1$}\\
  \hline%
FS Heap            & 0.05       & 0.15       & 0.35       & 0.70       & 0.02       & 0.00       & 0.05       & 0.10      \\
FS Heap Ptr & 0.03       & 0.42       & 5.17       & 6.21       & 0.00       & 0.00       & 0.03       & 0.07      \\
FS WSPD         & 0.02       & 0.18       & 0.33       & 0.64       & 0.00       & 0.00       & 0.05       & 0.09      \\
FS Ptr Levels   & 0.02       & 0.12       & 0.35       & 0.74       & 0.00       & 0.01       & 0.04       & 0.08      \\
FS Levels        & 0.02       & 0.13       & 0.32       & 0.67       & 0.00       & 0.00       & 0.05       & 0.11      \\
Chan            & 0.02       & 0.07       & 0.17       & 0.26       & 0.00       & 0.01       & 0.07       & 0.16      \\
Chan Mod        & 0.02       & 0.06       & 0.17       & 0.26       & 0.00       & 0.01       & 0.07       & 0.15      \\
Grid            & 0.01       & 0.16       & 0.26       & 0.34       & 0.00       & 0.01       & 0.08       & 0.15      \\
Grid FS Dir     & 0.01       & 0.29       & 0.40       & 0.49       & 0.00       & 0.00       & 0.07       & 0.15      \\
FS Directions   & 0.03       & 0.46       & 8.16       & 16.72      & 0.00       & 0.01       & 0.04       & 0.09      \\
  \hline%
  \multicolumn{1}{|l||}{$\eps=\frac{\sqrt{3}-1}{\sqrt{3}} \approx 0.73$}\\
  \hline%
BBox           & 0.02       & 0.00       & 0.01       & 0.02       & 0.00       & 0.00       & 0.02       & 0.02      \\
PCA            & 0.02       & 0.00       & 0.03       & 0.05       & 0.00       & 0.01       & 0.02       & 0.05      \\

\hline
\end{tabular}
}
\caption{Results for syntetic inputs}
\label{tbl:results:syn}
\end{table*}

\end{document}